\newtheorem{Theorem}{Theorem}
\newtheorem{Proposition}[Theorem]{Proposition}
\newtheorem{Corollary}[Theorem]{Corollary}
\title{On the noncommutative deformation of the operator graph corresponding to the Klein group}
\author[1]{G.G. Amosov\thanks {gramos@mi.ras.ru}}
\author[2, 3]{I.Yu. Zhdanovskiy\thanks {ijdanov@mail.ru}}
\affil[1]{Steklov Mathematical Institute of Russian Academy of Sciences}
\affil[2]{Moscow Institute of Physics and Technology}
\affil[3]{National Research University High School of Economics, Laboratory of Algebraic Geometry}
\begin{document}

\maketitle

\begin{abstract}
We study the noncommutative operator graph ${\mathcal L}_{\theta }$ depending on complex parameter $\theta $ recently introduced by M.E. Shirokov to construct channels with positive
quantum zero-error capacity having vanishing n-shot capacity. We define the noncommutative group $G$ and the algebra ${\mathcal A}_{\theta }$ which is a quotient of ${\mathbb C}G$
with respect to the special algebraic relation depending on $\theta $ such that the matrix representation $\phi $ of ${\mathcal A}_{\theta }$ results in the algebra ${\mathcal M}_{\theta }$ generated by ${\mathcal L}_{\theta }$. In the case of $\theta =\pm 1$ $\phi $ is degenerated to the faithful representation of ${\mathbb C}K_4$, where $K_4$ is the Klein group. Thus, ${\mathcal L}_{\theta }$ can be considered as a noncommutative deformation of the graph associated with the Klein group.
\end{abstract}

\section{Introduction.}

Denote $\mathfrak {B}({\mathcal H})$, $\mathfrak {T}({\mathcal H})$ and $\mathfrak {S}({\mathcal H})$ the algebra of all bounded operators, the Banach space of all trace class operators and the convex set of
positive unit-trace operators (quantum states) in a Hilbert space $\mathcal H$, respectively.
A completely positive trace-preserving linear map $\Phi :\mathfrak {T}({\mathcal H})\to \mathfrak {T}({\mathcal H})$ is said to be a quantum channel.
Given a quantum channel $\Phi $
one can pick up its Kraus decomposition of the form
\begin{equation}\label{Kraus}
\Phi (\rho)=\sum \limits _{n}V_n\rho V_n^*,
\end{equation}
where $V_n\in \mathfrak {B}({\mathcal H}),\ \sum \limits _nV_n^*V_n=I$, $I$ is the identity operator in $\mathcal H$, $\rho \in \mathfrak {S}({\mathcal H})$ \cite{hol}. Throughout all this paper the Hilbert space $\mathcal H$ is supposed to be finite-dimensional.

In \cite{graph} the noncommutative graph $\mathcal{G}(\Phi)$ of a quantum channel $\Phi $ was introduced. Let us consider
the Kraus decomposition (\ref {Kraus}), then $\mathcal {G}(\Phi)$ is the operator subspace
\begin{equation}\label{graph}
\mathcal {G}(\Phi)=\overline {Lin(V_j^*V_k)}.
\end{equation}
In \cite{graph2, graph3} it was shown that the operator subspace $\mathcal S$ is associated with some channel in the sense of (\ref {graph}) iff $I\in \mathcal S$
and $\mathcal S^*=\mathcal S$. It seems to be interesting to study the algebraic structure of (\ref {graph}). For example, the question arises whether it is possible to
show that $\mathcal {G}(\Phi )$ is an image of the representation of the algebra ${\mathbb C}G$ associated with some noncommutative group $G$? Here we shall consider
this question for an important particular case.

Consider the operator subspace ${\cal L}_{\theta} \subset Mat_4({\mathbb C})$:
\begin{equation}\label{L}
\begin{pmatrix}
a & b & c\theta & d\\
b & a & d & c/\theta\\
c/\theta & d & a & b\\
d & c\theta & b & a
\end{pmatrix}
\end{equation}
for $\theta \in {\mathbb C}^*$.
The operator graph (\ref {L}) was introduced in \cite {maxim} to construct channels with positive
quantum zero-error capacity having vanishing n-shot capacity.

A finite-dimensional channel $\Phi :\mathfrak {S}({\mathcal H})\to \mathfrak {S}({\mathcal H})$ is called pseudo-diagonal \cite{ruskai} if
\begin{equation}\label{pseudo}
\Phi (\rho )=\sum \limits _{i,j}c_{ij}<\psi _i|\rho |\psi _j>|i><j|
\end{equation}
where $\{c_{ij}\}$ is a Gram matrix of a collection of unit vectors, $\{|\psi_i>\}$ is a collection of vectors in $\mathcal H$ such that $\sum \limits _i|\psi _i><\psi _i|=I$
and $\{|i>\}$ is an orthonormal basis in $\mathcal H$. Pseudo-diagonal channels are complementary to entanglement-breaking channels and vice versa.
In \cite{maxim} it was shown that the noncommutative graph ${\cal L}_{\theta }$ can be associated with a family of pseudo-diagonal channels (\ref {pseudo})
depending on the parameter $\theta $.

In the present paper we shall show that ${\cal L}_{\theta }$ can be considered as a image of representation $\pi _{\theta }$ of the ring generated by the
group $G$ with three generators $x,y,z$ satisfying the relations
\begin{equation}\label{relations}
x^2=y^2=z^2=1,\ xz=zx,\ yz=zy.
\end{equation}
Notice that adding to (\ref {relations}) the relation
\begin{equation}\label{klein}
xy=yx=z
\end{equation}
we obtain the Klein group $K_4$.

The Klein group $K_4=\{1,x,y,z\}$ with the generators satisfying (\ref {relations})-(\ref {klein}) is abelian. Thus, all its irreducible representations are one-dimensional.
It implies that the minimal dimension of any faithful representation is equal to four. Pick up the orthonormal basis $\{|j>,\ 1\le j\le 4\}$ in the Hilbert space ${\mathcal H}_4,\ dim{\mathcal H}_4=4$.
Then, we can define the standard faithful representation of $K_4$ in ${\mathcal H}_4$ by the formula
$$
x|1>=y|1>=z|1>=|1>,
$$
\begin{equation}\label{kanal}
x|2>=y|2>=-|2>,\ z|2>=|2>,
\end{equation}
$$
x|3>=|3>,\ y|3>=z|3>=-|3>,
$$
$$
x|4>=-|4>,\ y|4>=|4>,\ z|4>=-|4>.
$$
The quantum channel corresponding to the graph generated by the elements $\{1,x,y,z\}$  satisfying (\ref {kanal}) is given by the formula
\begin{equation}\label{prost}
\Phi (\rho )=(1-\alpha -\beta )\rho +\alpha x\rho x+\beta y\rho y,
\end{equation}
$\rho \in \mathfrak {S}(H_4)$, $\alpha ,\beta \ge 0,\ \alpha +\beta \le 1$. The map defined by (\ref {prost}) is a dephazing channel which is a partial case
of pseudo-diagonal channel (\ref {pseudo}) such that
$$
\Phi (|j><j|)=|j><j|,\ 1\le j\le 4,
$$
$$
\Phi (|1><2|)=(1-2\alpha -2\beta)|1><2|,\ \Phi (|1><3|)=(1-2\beta )|1><3|,
$$
$$
\Phi (|1>4|)=(1-2\alpha )|1><4|,\ \Phi (|2><3|)=(1-2\alpha )|2><3|,\
$$
$$
\Phi (|2><4|)=(1-2\beta )|2><4|,\ \Phi (|3><4|)=(1-2\alpha -2\beta )|3><4|.
$$
A transition from the graph determined by (\ref {relations})-(\ref {klein}) corresponding to the channel
(\ref {prost}) to the graph ${\mathcal L}_{\theta }$ corresponding to a general pseudo-diagonal channel (\ref {pseudo}) can be considered as a noncommutative deformation of
the Klein group $K_4$ in the spirit of \cite {Fad}.

Our goal is a explanation of subalgebras ${\cal M}_{\theta}$ of $Mat_4({\mathbb C})$ generated by subspaces ${\cal L}_{\theta}, \theta \in {\mathbb C}^*$ in terms of representation theory.  It follows from (\ref {L}) that the generators of ${\cal M}_{\theta}$ can be pick up as follows:
\begin{equation}
\label{xyz}
X =
\begin{pmatrix}
0 & 1 & 0 & 0\\
1 & 0 & 0 & 0\\
0 & 0 & 0 & 1\\
0 & 0 & 1 &0
\end{pmatrix},
Y =
\begin{pmatrix}
0 & 0 & \theta & 0\\
0 & 0 & 0 & 1/\theta\\
1/\theta & 0 & 0 & 0\\
0 & \theta & 0 & 0
\end{pmatrix},
Z =
\begin{pmatrix}
0 & 0 & 0 & 1\\
0 & 0 & 1 & 0\\
0 & 1 & 0 & 0\\
1 & 0 & 0 & 0
\end{pmatrix}
\end{equation}
Alternatively it is possible to pick up the following matrices as generators of ${\cal M}_{\theta}$ for $\theta \in {\mathbb C}^*$:
\begin{equation}
X =
\begin{pmatrix}
0 & 1 & 0 & 0\\
1 & 0 & 0 & 0\\
0 & 0 & 0 & 1\\
0 & 0 & 1 &0
\end{pmatrix},
XYX =
\begin{pmatrix}
0 & 0 & 1/\theta & 0\\
0 & 0 & 0 & \theta\\
\theta & 0 & 0 & 0\\
0 & 1/\theta & 0 & 0
\end{pmatrix},
Z =
\begin{pmatrix}
0 & 0 & 0 & 1\\
0 & 0 & 1 & 0\\
0 & 1 & 0 & 0\\
1 & 0 & 0 & 0
\end{pmatrix}
\end{equation}
It results in ${\cal M}_{\theta} = {\cal M}_{1/\theta}$. Also, we have the following symmetry ${\cal M}_{\theta} = {\cal M}_{-\theta}$.
Note that ${\rm dim}_{\mathbb C}{\cal M}_{\theta} = 8$ for $\theta \ne \pm 1$ and ${\rm dim}_{\mathbb C}{\cal M}_{\theta} = 4$ for $\theta = \pm 1$.

For explanation of this effect we will construct family of algebras ${\cal A}_{\theta}, \theta \in {\mathbb C}^*$ as a quotient of ${\mathbb C}G$ by relation:
\begin{equation}\label{sootn}
(xy + yx)z = (\theta+\theta^{-1})\cdot 1
\end{equation}
One can see that we have canonical isomorphism: ${\cal A}_{\theta} \cong {\cal A}_{1/\theta}$.
For construction of this family we will use a semidirect product of ${\mathbb Z}_2$ and ${\mathbb Z} \oplus {\mathbb Z}_2$ and its representations. Matrices $X,Y,Z$ defines representation $\phi$ of algebras ${\cal A}_{\theta}$ and ${\cal M}_{\theta} = \phi({\cal A}_{\theta})$.
Family ${\cal A}_{\theta}$ has the following symmetry ${\cal A}_{\theta} \cong {\cal A}_{-\theta}, \theta \in {\mathbb C}^*$.
Any algebra ${\cal A}_{\theta}$ has dimension 8. Our main result tells us that algebra ${\cal A}_{\theta}$ is isomorphic to $Mat_2({\mathbb C}) \oplus Mat_2({\mathbb C})$ for $\theta \ne \pm 1$. For $\theta = \pm 1$ algebras ${\cal A}_{\theta}$ have 4-dimensional radical $J$, and, quotient ${\cal A}_{\theta}/J$ is a direct sum of four copies of ${\mathbb C}$'s. Algebra ${\cal A}_{\theta}$ is "universal" for representation $\phi$, i.e.
\begin{itemize}
\item{we have factorization: for any $\theta \in {\mathbb C}^*$ representation $\phi: {\mathbb C}G \to Mat_4({\mathbb C})$ is a composition ${\mathbb C}G \to {\cal A}_{\theta} \to Mat_4({\mathbb C})$. }
\item{Also, for $\theta \ne \pm 1$ map $\phi: {\cal A}_{\theta} \to {\cal M}_{\theta}$ is an isomorphism. }
\end{itemize}

Our article is organized as follows.
In section 2 we discuss matrices $X,Y,Z$ and Klein group. In this section we remind the notion of Klein group and some property of representations of it. In section 3, we give the definition of group $G$. This group $G$ is natural object for studying of ${\cal M}_{\theta}, \theta \in {\mathbb C}^*$.
Also, we give preliminary definitions and results in group theory. In particular, we prove that group $G$ is a semidirect product of ${\mathbb Z}_2$ and ${\mathbb Z} \oplus {\mathbb Z}_2$. One can find the exploration of center of group algebra ${\mathbb C}G$ in section 4. This exploration is important for studying of representations of algebra ${\mathbb C}G$. We give a natural description of irreducible ${\mathbb C}G$-modules in terms of maximal commutative subalgebra ${\mathbb C}P \subset {\mathbb C}G$, where $P$ is a maximal abelian subgroup of $G$.
One can describe the connection between ${\cal M}_{\theta}$ and ${\mathbb C}G$ in the following manner: matrices $X,Y,Z$ defines representation of the group $G$. In section 6 we prove that ${\mathbb C}G$ - representation $\phi$ defined by matrices $X,Y,Z$ is semisimple, i.e. direct sum of irreducible ${\mathbb C}G$ - modules.
Also, we introduce the family of algebras ${\cal A}_{\theta}$. This family is a quotient of ${\mathbb C}G$ by relations (\ref{sootn}). Family ${\cal A}_{\theta}$ plays an important role for studying of representation $\phi$. Actually, morphism $\phi: {\mathbb C}G \to Mat_4({\mathbb C})$ factorizes into composition: ${\mathbb C}G \to {\cal A}_{\theta} \to Mat_4({\mathbb C})$. Thus, $\phi({\cal A}_{\theta}) = {\cal M}_{\theta}$.
We prove that if $\theta \ne \pm 1$ algebra ${\cal A}_{\theta}$ is isomorphic to direct sum of two copies of matrix algebras $Mat_2({\mathbb C})$. In the case $\theta = \pm 1$, algebra ${\cal A}_{\theta}$ has 4-dimensional radical. This means that relation (\ref{sootn}) is universal, i.e. for general $\theta \in {\mathbb C}^*$ representation $\phi$ provides isomorphism. In the case $\theta = \pm 1$, representation $\phi$ is not an isomorphism, because algebras ${\cal A}_{\theta}$ has radical. Since $\phi$ is semisimple, $\phi$ annihilate radical and we get 4-dimensional algebra ${\cal M}_{\theta}$.
Article has two appendices. In the beginning of appendix A, we give some classical definitions in homological algebra. In this section we study homological properties of irreducible ${\mathbb C}G$-modules. In particular, we calculate extension group for irreducible ${\mathbb C}G$-modules.
In Appendix B, we recall some definitions and notions of noncommutative algebraic geometry. We introduce two kinds of noncommutative varieties for arbitrary finite generated algebra $A$: representation space ${\bf Rep}_nA = {\rm Hom}_{alg}(A,Mat_n({\mathbb C}))$ and moduli variety ${\cal M}_nA$ which is the quotient of ${\bf Rep}_nA$ by natural action of ${\rm GL}_n({\mathbb C})$. In this section we study connection between varieties ${\bf Rep}_2{\mathbb C}P$ and ${\bf Rep}_2{\mathbb C}G$ as well as ${\cal M}_2{\mathbb C}P$ and ${\cal M}_2{\mathbb C}G$.

\section{Matrices $X,Y,Z$ and Klein's group.}
\label{kleingr}

In this section we will introduce matrices $X,Y,Z$ depending on $\theta$ and consider partial cases $\theta = \pm 1$. In this case we will show that these matrices define representation of Klein's group.

Consider matrices $X$, $Y$ and $Z$ from the formula (\ref{xyz}).
Denote by $I$ the identity matrix.
It follows that ${\cal L}_{\theta} = a I + b X + c Y + d Z, a,b,c,d \in {\mathbb C}$ for any $\theta \in {\mathbb C}^*$. Moreover, $X,Y,Z$ satisfy (\ref {relations}).
These relations motivate us to interpret subspace ${\cal L}_{\theta}$ as an image of admissible $4$-dimensional representations of the group $G$. 

One can see that if $\theta = 1$, then $X,Y,Z$ have the following view:
\begin{equation}
\begin{pmatrix}
0 & 1 & 0 & 0\\
1 & 0 & 0 & 0\\
0 & 0 & 0 & 1\\
0 & 0 & 1 &0
\end{pmatrix},
\begin{pmatrix}
0 & 0 & 1 & 0\\
0 & 0 & 0 & 1\\
1 & 0 & 0 & 0\\
0 & 1 & 0 & 0
\end{pmatrix},
\begin{pmatrix}
0 & 0 & 0 & 1\\
0 & 0 & 1 & 0\\
0 & 1 & 0 & 0\\
1 & 0 & 0 & 0
\end{pmatrix}
\end{equation}
respectively.
In this case our matrices are commuting. And hence, we have the following relations:
\begin{equation}
\label{commuti}
XY = YX, XZ = ZX, YZ = ZY, X^2 = Y^2 = Z^2 = 1.
\end{equation}
Also, one can check that
\begin{equation}
\label{xyz}
Z = XY.
\end{equation}

If we consider group with generators $x,y$ satisfying to relations (\ref{commuti}) and (\ref{xyz}), we get the well-known Klein's group $K_4$ of order 4 isomorphic to ${\mathbb Z}_2 \oplus {\mathbb Z}_2$, where first ${\mathbb Z}_2$ is generated by $x$, second one is generated by $y$.
Thus, in the case of $\theta = 1$ matrices $X,Y$ define representation $\phi$ of $K_4$ by rule: $\phi: x \mapsto X, y \mapsto Y$.
Recall that representation $\phi$ is a direct sum of irreducible ones. As for any abelian group, all irreducible representation of $K_4$ is 1-dimensional.
These one-dimensional representations (or characters) are elements of group ${\rm Hom}(K_4, {\mathbb C}^*)$, where ${\mathbb C}^*$ is an algebraic torus by the definition.
By Pontryagin's duality, group ${\rm Hom}(K_4, {\mathbb C}^*) \cong K_4$. Thus, we can describe characters in the following manner:
$$
\chi_1: x \mapsto 1, y \mapsto 1
$$
$$
\chi_x: x \mapsto -1, y \mapsto 1
$$
$$
\chi_y: x \mapsto 1, y \mapsto -1
$$
$$
\chi_{xy}: x \mapsto -1, y \mapsto -1
$$
One can check that representation $\phi$ is isomorphic to direct sum $\chi_1 \oplus \chi_x \oplus \chi_y \oplus \chi_{xy}$.
It means that there is a basis in which matrix $X$ and $Y$ are diagonal and matrices $X$ and $Y$ have eigenvalues $\chi_1(x),\chi_x(x),\chi_y(x),\chi_{xy}(x)$ and $\chi_1(y),\chi_x(y),\chi_y(y),\chi_{xy}(y)$ respectively.

Also, consider the case of $\theta = -1$. 
One can check that if $\theta = -1$ matrices $X,Y,Z$ are commuting and $Z = -XY$. Of course, matrices $X, -Y$ as images of generators $x,y$ of $K_4$ define representation $\rho: x \mapsto X, y \mapsto -Y$ which is similar to the case $\theta =1$.

In the next sections we will generalize description of ${\cal L}_{\pm 1}$ to the case of arbitrary $\theta$ by means of representation theory methods.

\section{Some facts from group theory and description of the group $G$.}

In this section we will study group $G$ with generators $x,y,z$ and relations (\ref{relations}).

Firstly, recall the notion of a free group $F(S)$ with a set of
generators
$S$. Suppose that $S = \{s_1,...,s_k\}$.
Denote $J$ the set $S \cup S^{-1}$. Define {\it word} as a product of
elements of $J$. Word can be simplified by
omitting consequent symbols $s$ and $s^{-1}$. Word which cannot be
simplified is called {\it reduced}.
The set of reduced words equipped by the operation of concatenation of
words
is said to be a free group $F(S)$ of {\it rank $k$}.  Finite generated group $G$ is a quotient of free group with arbitrary set of generators. We will say that group $H$ has {\it presentation} $H = \langle S | R \rangle$ where $S$ is a set of generators, $R$ is a set of relations. One can use the following description of $R$. Any relation we will rewrite in the view: $r = 1$, where $r \in F(S)$.
Group $H$ is a quotient $F(S)/N(R)$, where $N(R)$ is a {\it normal closure} in free group $F(S)$ of the set of relations $R$, i.e. minimal normal subgroup of $F(S)$ containing $R$. Of course, presentation of fixed group $H$ is not unique.

Recall the construction of free product of groups. Assume that we have two groups $H_1$, $H_2$ which have the following presentations: $H_1 = \langle S_1 | R_1 \rangle$ and $H_2 = \langle S_2 | R_2 \rangle$ with different sets of generators $S_1$ and $S_2$. Define free product $H_1 * H_2$ as group with presentation $\langle S_1 \cup S_2 | R_1 \cup R_2 \rangle$. One can show that notion of free product is well-defined. Analogously, one can define free product $H_1 * ... * H_k$ of the groups $H_1,...,H_k$.

Assume that $S = \{x,y,z\}$. One can consider group $G$ as quotient of $F(S)$. There are relations $x^2 = y^2 = z^2 = 1, xz=zx, yz=zy$. Rewrite these relations in the following manner: $x^2 = y^2 = z^2 = 1, xzx^{-1}z^{-1} = yzy^{-1}z^{-1} = 1$. Thus, $R = \{ x^2, y^2, z^2, xzx^{-1}z^{-1}, yzy^{-1}z^{-1} \}$. 
Consider set of generators $S = \{x,y,z\}$, set of relations: $R_1 = \{ x^2,y^2,z^2 \} \subset R$, normal subgroup $N(R_1)$ and quotient $P = F(S)/N(R_1)$. The group $P$ has the following presentation: $P = \langle S | R_1 \rangle$. It is clear that
\begin{equation}
P = \langle x,y,z | x^2,y^2,z^2 \rangle = \langle x| x^2\rangle * \langle y| y^2\rangle * \langle z| z^2\rangle \cong {\mathbb Z}_2 * {\mathbb Z}_2 * {\mathbb Z}_2
\end{equation}
where ${\mathbb Z}_2$ is a cyclic group of order 2. In this way we get that group $G$ is a quotient of $P$ by the relations $xzx^{-1}z^{-1} = 1, yzy^{-1}z^{-1} = 1$.

Consider natural morphism: $\phi: {\mathbb Z}_2 * {\mathbb Z}_2 * {\mathbb Z}_2 \to {\mathbb Z}_2 \oplus {\mathbb Z}_2 \oplus {\mathbb Z}_2$.
Using results on subgroup of free product (independently Kurosh \cite{kurosh}, Baer and Levi \cite{baerlevi}, Takahasi \cite{takahasi}), we get that kernel of $\phi$ is a free group $F_5$ of rank 5. It will be convenient to choose generators of the ${\rm Ker}\phi = F_5$ as follows: $xyx^{-1}y^{-1}, xzx^{-1}z^{-1}, yzy^{-1}z^{-1}, xyzy^{-1}z^{-1}x^{-1}, yxzx^{-1}z^{-1}y^{-1}$. Of course, using relations $x^2 = y^2 = z^2 = 1$, we obtain that $xyx^{-1}y^{-1} = xyxy, xzx^{-1}z^{-1} = xzxz, yzy^{-1}z^{-1} = yzyz, xyzy^{-1}z^{-1}x^{-1} = xyzyzx, yxzx^{-1}z^{-1}y^{-1} = yxzxzy$.
One can show that group $G$ is a quotient of $P$ by the relations $R_2 = \{xzx^{-1}z^{-1} = xzxz = 1, yzy^{-1}z^{-1} = yzyz = 1\}$. Consider normal closure $N(R_2)$ in the group $P$ of the set of the elements $R_2$. 
Consider quotient of ${\mathbb Z}_2 * {\mathbb Z}_2 * {\mathbb Z}_2$ by normal subgroup $N(R_2)$.

Let us prove the following fact:
\begin{Proposition}
We have the following exact sequence for non-abelian group $G$:
\begin{equation}
\label{longdiag}
\xymatrix{
0\ar[r] & {\mathbb Z}\ar[r] & G\ar[r] & {\mathbb Z}_2 \oplus {\mathbb Z}_2 \oplus {\mathbb Z}_2 \ar[r] & 1,
}
\end{equation}
where normal subgroup ${\mathbb Z}$ of the group $G$ is generated by element $xyxy$.
\end{Proposition}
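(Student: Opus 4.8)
The plan is to recognize the stated map $G \to \mathbb{Z}_2\oplus\mathbb{Z}_2\oplus\mathbb{Z}_2$ as the abelianization of $G$, whose kernel is the commutator subgroup $[G,G]$, and then to prove that $[G,G]$ coincides with the infinite cyclic group $\langle xyxy\rangle$. Since the relations (\ref{relations}) force $z$ to be central and leave $x,y$ with no relation beyond $x^2=y^2=1$, the abelianization $G^{ab}$ is exactly $\langle x,y,z\mid x^2,y^2,z^2,[x,y],[x,z],[y,z]\rangle\cong\mathbb{Z}_2\oplus\mathbb{Z}_2\oplus\mathbb{Z}_2$, which immediately gives the surjection on the right of (\ref{longdiag}). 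It therefore remains to establish three facts: that $\langle xyxy\rangle$ is normal, that $G/\langle xyxy\rangle$ is abelian (so that the kernel of abelianization is contained in, and hence equal to, $\langle xyxy\rangle$), and that $xyxy$ has infinite order.

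For normality I would compute the conjugates by the generators directly. Using $x^2=y^2=1$ one gets $x(xyxy)x^{-1}=yxyx=(yx)^2=(xyxy)^{-1}$ and likewise $y(xyxy)y^{-1}=(xyxy)^{-1}$, while $z$ is central and so fixes $xyxy$; thus conjugation by each generator sends $xyxy$ to $(xyxy)^{\pm1}\in\langle xyxy\rangle$, proving $\langle xyxy\rangle\trianglelefteq G$. For the quotient I would observe that imposing $xyxy=1$ is, in the presence of $x^2=y^2=1$, equivalent to imposing $xy=yx$; hence $G/\langle xyxy\rangle\cong\langle x,y,z\mid x^2,y^2,z^2,[x,y],[x,z],[y,z]\rangle\cong\mathbb{Z}_2\oplus\mathbb{Z}_2\oplus\mathbb{Z}_2$. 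In particular $G/\langle xyxy\rangle$ is abelian, so $[G,G]\subseteq\langle xyxy\rangle$; as $xyxy=[x,y]\in[G,G]$ the reverse inclusion is clear, and the kernel of $G\to\mathbb{Z}_2\oplus\mathbb{Z}_2\oplus\mathbb{Z}_2$ is exactly $\langle xyxy\rangle$.

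The delicate point, and the one I expect to be the main obstacle, is to show that $xyxy$ has infinite order, i.e.\ that the kernel is $\mathbb{Z}$ rather than a finite cyclic group; this cannot be read off from the relations alone and needs an explicit model. I would produce it by the homomorphism $\psi:G\to D_\infty=\langle s,t\mid s^2=t^2=1\rangle$ defined on generators by $x\mapsto s$, $y\mapsto t$, $z\mapsto 1$. One checks directly that $\psi$ respects (\ref{relations}), and since $st$ is the infinite-order translation in the infinite dihedral group, $\psi(xyxy)=(st)^2$ has infinite order; hence so does $xyxy$, giving $\langle xyxy\rangle\cong\mathbb{Z}$. Realizing $D_\infty$ concretely by the two reflections of $\mathbb{R}$ at $0$ and at $1$ makes the infinitude of the order transparent and completes the exact sequence (\ref{longdiag}). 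This homomorphism, which simply forgets $z$, is moreover the natural first step toward the later identification of $G$ with the semidirect product announced in the abstract.
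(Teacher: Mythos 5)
Your proof is correct, but it takes a genuinely different route from the paper's. The paper works from the free product presentation: it invokes the Kurosh--Baer--Levi--Takahasi subgroup theorem to identify the kernel of $\mathbb{Z}_2 * \mathbb{Z}_2 * \mathbb{Z}_2 \to \mathbb{Z}_2 \oplus \mathbb{Z}_2 \oplus \mathbb{Z}_2$ as a free group $F_5$ with explicit generators $xyxy, xzxz, yzyz, xyzyzx, yxzxzy$, and then argues that the normal closure of $\{xzxz, yzyz\}$ is the kernel of the composite $F_5 \to F_5/[F_5,F_5] = \mathbb{Z}^{\oplus 5} \to \mathbb{Z}$ (projection onto the $xyxy$-coordinate), so that the image of $F_5$ in $G$ is $\mathbb{Z}$. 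You instead recognize the map $G \to \mathbb{Z}_2^{\oplus 3}$ as the abelianization, verify normality of $\langle xyxy\rangle$ by conjugating with the generators (using that $z$ is central and that $x,y$ invert $xyxy$), identify $G/\langle xyxy\rangle$ with $\mathbb{Z}_2^{\oplus 3}$ to pin the kernel down exactly, and certify the infinite order of $xyxy$ via the homomorphism $x \mapsto s$, $y \mapsto t$, $z \mapsto 1$ onto the infinite dihedral group. Your argument is more elementary and self-contained: it avoids the subgroup theorem for free products entirely, and it supplies an explicit witness for the one genuinely delicate point --- that $xyxy$ has infinite order --- which in the paper's treatment is buried in the assertion (left to the reader as ``one can check'') that $N(R_2)$ equals the kernel of $F_5 \to \mathbb{Z}$. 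What the paper's route buys in exchange is structural information that you do not get: the explicit free basis of the kernel upstairs and the bookkeeping of exactly which of the five generators die in $G$, which feeds naturally into the diagram (\ref{comdiagf}). Your closing remark that the $D_\infty$ quotient ``forgets $z$'' and anticipates the semidirect product decomposition is apt, though note the paper's subsequent semidirect product uses the subgroup $P = \langle xy, z\rangle$ rather than the quotient by $z$.
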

\begin{proof}
We have the following exact sequence for free product ${\mathbb Z}_2 * {\mathbb Z}_2 * {\mathbb Z}_2$:
\begin{equation}
\xymatrix{
1 \ar[r] & F_5 \ar[r] & {\mathbb Z}_2 * {\mathbb Z}_2 * {\mathbb Z}_2 \ar[r] & {\mathbb Z}_2 \oplus {\mathbb Z}_2 \oplus {\mathbb Z}_2 \ar[r] & 1,
}
\end{equation}
As we know, free group $F_5$ is generated by $xyxy, xzxz, yzyz, xyzyzx, yxzxzy$. Group $N(R_2)$ is a normal closure of $xzxz, yzyz$. If $xzxz = 1$ and $yzyz = 1$ then $xyzyzx = 1$ and $yxzxzy = 1$. Denote by $a_1,...,a_5$ the generators $xyxy, xzxz, yzyz, xyzyzx, yxzxzy$ of $F_5$. Denote by $F'_5 = [F_5,F_5]$ commutant of $F_5$. Consider natural morphism $f: F_5 \to F_5/F'_5 = {\mathbb Z}^{\oplus 5}$. Denote by $a_i, i = 1,...,5$ the generators of ${\mathbb Z}^{\oplus 5}$ which are images of $xyxy, xzxz, yzyz, xyzyzx, yxzxzy$ under $f$ respectively. Also, we have projection ${\mathbb Z}^{\oplus 5} \to {\mathbb Z}$ defined by rule: $\sum k_i a_i \mapsto k_1 a_1$. Consider composition $F_5 \to {\mathbb Z}$. One can check that $N(R_2)$ is a kernel of this composition.

Moreover, we get the following commutative diagram:
\begin{equation}
\label{comdiagf}
\xymatrix{
0 \ar[r] & {\mathbb Z}\ar[r] & G\ar[r] & {\mathbb Z}_2 \oplus {\mathbb Z}_2 \oplus {\mathbb Z}_2 \ar[r] & 1\\
1 \ar[r] & F_5 \ar[r]\ar[u] & {\mathbb Z}_2 * {\mathbb Z}_2 * {\mathbb Z}_2 \ar[r]\ar[u] & {\mathbb Z}_2 \oplus {\mathbb Z}_2 \oplus {\mathbb Z}_2 \ar[r]\ar[u]^{\cong} & 1\\
1 \ar[r] & N(R_2)\ar[r]\ar[u] & N(R_2)\ar[r]\ar[u] & 1\ar[u]
}
\end{equation}
\end{proof}

Exact sequence (\ref{longdiag}) tells us that group $G$ is the extension of ${\mathbb Z}_2 \oplus {\mathbb Z}_2 \oplus {\mathbb Z}_2$ by ${\mathbb Z}$. Automorphism group ${\rm Aut}({\mathbb Z}) \cong {\mathbb Z}_2$, i.e. there are only two automorphisms of ${\mathbb Z}$: trivial and morphism defined by correspondence: $n \mapsto -n, n \in {\mathbb Z}$. It is clear that we have a well-defined action of ${\mathbb Z}_2 \oplus {\mathbb Z}_2 \oplus {\mathbb Z}_2$ on ${\mathbb Z}$ by conjugation. This action defines homomorphism of groups: ${\mathbb Z}_2 \oplus {\mathbb Z}_2 \oplus {\mathbb Z}_2 \to {\rm Aut}({\mathbb Z}) = {\mathbb Z}_2$. One can calculate that kernel of this morphism is a subgroup ${\mathbb Z}_2 \oplus {\mathbb Z}_2$ generated by images of $xy$ and $z$ under natural morphism $G \to {\mathbb Z}_2 \oplus {\mathbb Z}_2 \oplus {\mathbb Z}_2$. Denote by $P$ subgroup of $G$ generated by $xy$ and $z$.
We have the following commutative diagram:
\begin{equation}
\xymatrix{
&& {\mathbb Z}_2 \ar[r]^{=} & {\mathbb Z}_2\ar[r] & 1\\
0 \ar[r] & {\mathbb Z}\ar[r] & G \ar[r]\ar[u] & {\mathbb Z}_2 \oplus {\mathbb Z}_2 \oplus {\mathbb Z}_2 \ar[r]\ar[u] & 1\\
0 \ar[r] & {\mathbb Z}\ar[r]\ar[u]^{=} & P \ar[r]\ar[u] & {\mathbb Z}_2 \oplus {\mathbb Z}_2 \ar[r]\ar[u] & 1
}
\end{equation}

It follows that group $P$ is abelian. Also, one can show that $P \cong {\mathbb Z} \oplus {\mathbb Z}_2$.

Recall the notion of {\it semidirect} product. Consider group $H$. Assume that there are normal subgroup $N_1$ and subgroup $H_1$ of $H$. Also, assume that composition of embedding $H_1 \hookrightarrow H$ and projection $H \to H/N_1$ is an isomorphism $H_1 \cong H/N_1$. In this case, we will say that $H$ is a semidirect product of $N_1$ and $H_1$. We will denote by $N_1 \rtimes H_1$ the semidirect product of $N_1$ and $H_1$. Equivalently, assume that there is an exact sequence:
\begin{equation}
\label{semi}
\xymatrix{
1 \ar[r] & N_1 \ar[r] & H \ar[r] & H/N_1 \ar[r] & 1.
}
\end{equation}
$H$ is a semidirect product of $N_1$ and $H_1$ iff sequence (\ref{semi}) is split, i.e. there is a section $H/N_1 \to H$ of the natural morphism
$H \to H/N_1$. Also, one can say that there is a subgroup $H_1$ such that $H_1 \cong H/N_1$ and $H_1 \cap N_1 = \{1\}$.

\begin{Proposition}
Group $G$ is a semidirect product $({\mathbb Z} \oplus {\mathbb Z}_2) \rtimes {\mathbb Z}_2$.
\end{Proposition}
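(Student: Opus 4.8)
The plan is to split the short exact sequence $1 \to P \to G \to {\mathbb Z}_2 \to 1$ furnished by the middle column of the last commutative diagram, in which $P = \langle xy, z\rangle \cong {\mathbb Z} \oplus {\mathbb Z}_2$ is normal in $G$ (it is the kernel of the composite $G \to {\mathbb Z}_2 \oplus {\mathbb Z}_2 \oplus {\mathbb Z}_2 \to {\rm Aut}({\mathbb Z}) = {\mathbb Z}_2$) and $G/P \cong {\mathbb Z}_2$. By the characterization of semidirect products recalled just above, it suffices to exhibit a subgroup $H_1 \le G$ with $H_1 \cong {\mathbb Z}_2$, $H_1 \cap P = \{1\}$, mapping isomorphically onto $G/P$; then $G = P \rtimes H_1$ and the identification $P \cong {\mathbb Z} \oplus {\mathbb Z}_2$ already established yields the claim.

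First I would take $H_1 = \langle x\rangle$. Since $x^2 = 1$ is one of the defining relations and $x \neq 1$ — its image $(1,0,0)$ under the natural morphism $G \to {\mathbb Z}_2 \oplus {\mathbb Z}_2 \oplus {\mathbb Z}_2$ is nontrivial — the element $x$ has order exactly $2$, so $\langle x\rangle \cong {\mathbb Z}_2$. Next I would verify $x \notin P$: under that same projection the subgroup $P$ lands in the subgroup generated by the images of $xy$ and $z$, namely $\{(0,0,0),(1,1,0),(0,0,1),(1,1,1)\}$, and the image $(1,0,0)$ of $x$ is not among these elements. Hence $x \notin P$, so $\langle x\rangle \cap P = \{1\}$; and since $|G/P| = 2$ while $x$ represents the nontrivial coset, the composite $\langle x\rangle \hookrightarrow G \to G/P$ is an isomorphism.

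These two facts exhibit $G$ as the semidirect product $P \rtimes \langle x\rangle \cong ({\mathbb Z}\oplus {\mathbb Z}_2)\rtimes {\mathbb Z}_2$. For completeness I would also record the conjugation action that makes the product nontrivial: using $x^2 = y^2 = 1$ one computes $x(xy)x^{-1} = yx = (xy)^{-1}$, so $x$ acts by inversion on the factor ${\mathbb Z} = \langle xy\rangle$, while $xzx^{-1} = z$ (as $x$ and $z$ commute) shows $x$ acts trivially on the factor ${\mathbb Z}_2 = \langle z\rangle$.

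The only point demanding genuine care is the disjointness $\langle x\rangle \cap P = \{1\}$, which is exactly where the previously computed image of $P$ in ${\mathbb Z}_2 \oplus {\mathbb Z}_2 \oplus {\mathbb Z}_2$ is used; everything else is the definitional unwinding of a split exact sequence. I therefore expect no serious obstacle here, the substantive structural work — the exact sequences (\ref{longdiag}) and the identification $P \cong {\mathbb Z}\oplus{\mathbb Z}_2$ — having already been carried out.
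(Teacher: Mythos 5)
Your proof is correct and follows essentially the same route as the paper: both split the exact sequence $1 \to P \to G \to \mathbb{Z}_2 \to 1$ by taking the section determined by the element $x$. You merely supply details the paper leaves implicit (that $x$ has order exactly $2$, that $x \notin P$ via the projection to $\mathbb{Z}_2 \oplus \mathbb{Z}_2 \oplus \mathbb{Z}_2$, and the explicit conjugation action), all of which check out.
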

\begin{proof}
We have the following exact sequence:
\begin{equation}
\xymatrix{
0 \ar[r] & P = {\mathbb Z} \oplus {\mathbb Z}_2\ar[r] & G\ar[r] & {\mathbb Z}_2 \ar[r] & 1
}
\end{equation}
Recall that subgroup $P = {\mathbb Z} \oplus {\mathbb Z}_2$ is generated by $xy$ and $z$.
This sequence has section ${\mathbb Z}_2 \to G$ defined by element $x$. Therefore, $G$ is a semidirect product $({\mathbb Z} \oplus {\mathbb Z}_2) \rtimes {\mathbb Z}_2$.
\end{proof}

\section{Center of group algebra ${\mathbb C}G$.}
\label{cent}

Let us pick up generators of group $G = ({\mathbb Z} \oplus {\mathbb Z}_2) \rtimes {\mathbb Z}_2$ as follows: $g = xy$, $x$ and $z$. Elements $g,x,z$ are generators because we have the identity: $y = xg$. Element $g$ has infinite order. In this case group $P = {\mathbb Z} \oplus {\mathbb Z}_2$ is generated by $g$ and $z$. We will use multiplicative description of $\mathbb Z$. Thus, we have the following relations: $x^2 = z^2 = 1, xz = zx, zg = gz, xgx = g^{-1}$.
In this section we will study representation theory of group $G$.

Since group $P$ is a subgroup of index 2 in the group $G$, we can formulate the following useful fact:
\begin{Proposition}
Any element $w \in G$ can be uniquely written as follows: $w = g'$ or $w = g'x$ for arbitrary element $g' \in P$.
\end{Proposition}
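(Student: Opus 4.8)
The plan is to deduce this directly from the coset decomposition of $G$ relative to $P$, using the index-two structure established in the preceding proposition. Recall that $G = ({\mathbb Z} \oplus {\mathbb Z}_2) \rtimes {\mathbb Z}_2$ sits in the split exact sequence
\begin{equation*}
\xymatrix{
0 \ar[r] & P \ar[r] & G \ar[r]^{\pi} & {\mathbb Z}_2 \ar[r] & 1
}
\end{equation*}
where $P = {\mathbb Z} \oplus {\mathbb Z}_2$ is normal of index two and $\pi$ is the quotient map onto $G/P \cong {\mathbb Z}_2$. The element $x$ furnishes the section of $\pi$, so $\pi(x)$ is the nontrivial element of ${\mathbb Z}_2$; in particular $x \notin P$, which is the one fact I would record explicitly before starting.

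First I would prove existence. Given $w \in G$, I examine $\pi(w) \in {\mathbb Z}_2$. Since the quotient has only two elements, either $\pi(w) = 1$ or $\pi(w) = \pi(x)$. In the first case $w \in \ker \pi = P$, so $w = g'$ with $g' := w \in P$. In the second case $\pi(wx^{-1}) = \pi(w)\pi(x)^{-1} = 1$, hence $wx^{-1} \in P$; setting $g' := wx^{-1}$ and invoking $x^2 = 1$, so that $x^{-1} = x$, yields $w = g'x$ with $g' \in P$. This exhausts all of $G$.

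Next I would verify uniqueness. The two presentations are separated by the two distinct cosets $P$ and $Px$: if $w = g'$ then $\pi(w) = 1$, whereas if $w = g'x$ then $\pi(w) = \pi(x) \ne 1$, so no $w$ can admit both forms at once. Uniqueness of $g'$ within a fixed form is then immediate by cancellation, since from $g'_1 x = g'_2 x$ one multiplies on the right by $x^{-1}$ to obtain $g'_1 = g'_2$ (and the case $w = g'$ is trivial).

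I do not expect any substantial obstacle: the entire content is the disjoint partition $G = P \sqcup Px$ forced by $[G:P] = 2$, which is already guaranteed by the semidirect product structure $G = P \rtimes {\mathbb Z}_2$ proved above. The only subtlety worth care is confirming that $x \notin P$, so that $Px$ is genuinely the nontrivial coset and does not collapse back into $P$; this is exactly the statement that $x$ maps onto a generator of the quotient ${\mathbb Z}_2$.
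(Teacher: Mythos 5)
Your argument is correct and is exactly the coset decomposition $G = P \sqcup Px$ from $[G:P]=2$ that the paper itself invokes (the paper states the proposition without proof, justifying it only by the index-two remark, and later uses the decomposition $G = P \cup Px$ explicitly). Your added care in checking $x \notin P$ via the section $\mathbb{Z}_2 \to G$ is a sound way to make the implicit argument complete.
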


Recall the notion of free $A$-module for arbitrary algebra $A$. $A$-module $M$ is a free $A$-module of rank $n$ iff $M$ is isomorphic to direct sum of $n$ copies of $A$: $M \cong \oplus^n_{i=1}A$.
\begin{Corollary}
Group algebra ${\mathbb C}G$ is a free ${\mathbb C}P$ - module of rank 2.
\end{Corollary}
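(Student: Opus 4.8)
The plan is to deduce this directly from the preceding Proposition, which carries all the essential content. The idea is that the Proposition exhibits a decomposition of $G$ into the two cosets of $P$, and passing to group algebras turns this coset decomposition into a free-module decomposition.

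First I would recall that, as a complex vector space, ${\mathbb C}G$ has the group elements $\{w : w \in G\}$ as a basis. By the Proposition, every $w \in G$ is uniquely of the form $g'$ or $g'x$ with $g' \in P$. Hence this basis splits as a disjoint union of the two sets $\{g' : g' \in P\}$ and $\{g'x : g' \in P\}$, the first of which is precisely the standard basis of the subalgebra ${\mathbb C}P$.

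Next I would produce the candidate isomorphism. Regarding ${\mathbb C}G$ as a left ${\mathbb C}P$-module via left multiplication, define
\[
\Psi : {\mathbb C}P \oplus {\mathbb C}P \to {\mathbb C}G, \qquad \Psi(a,b) = a + bx.
\]
Since for $p \in {\mathbb C}P$ one has $p\cdot(a+bx) = (pa) + (pb)x$, the map $\Psi$ is a homomorphism of left ${\mathbb C}P$-modules. To check that it is bijective it suffices to observe that it carries the natural basis of ${\mathbb C}P \oplus {\mathbb C}P$, namely the elements $(g',0)$ and $(0,g')$ for $g' \in P$, onto the two halves $\{g'\}$ and $\{g'x\}$ of the basis of ${\mathbb C}G$; the uniqueness clause of the Proposition guarantees that these images are distinct and together exhaust all of $G$. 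Thus $\Psi$ sends a basis bijectively to a basis and is an isomorphism of ${\mathbb C}P$-modules, giving ${\mathbb C}G \cong {\mathbb C}P \oplus {\mathbb C}P$, i.e. a free ${\mathbb C}P$-module of rank $2$.

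There is no genuine obstacle here: the whole weight of the argument sits in the Proposition, and the Corollary is a formal consequence. The only points requiring a little care are choosing the side on which the module structure acts—one wants the coset representative $x$ on the right, as in the statement $w = g'x$, so that left multiplication by ${\mathbb C}P$ respects the direct sum—and noting that, since $P$ has index $2$ it is automatically normal, the left and right coset decompositions agree and the same argument yields a free right ${\mathbb C}P$-module structure as well.
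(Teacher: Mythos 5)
Your proof is correct and follows essentially the same route as the paper: the coset decomposition $G = P \cup Px$ from the preceding Proposition yields ${\mathbb C}G \cong {\mathbb C}P \oplus {\mathbb C}P$ with $1$ and $x$ as free generators. You merely spell out the module isomorphism more explicitly than the paper does.
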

\begin{proof}
Using decomposition $G = P \cup Px$, we get the decomposition of ${\mathbb C}P$-module ${\mathbb C}G$ into direct sum ${\mathbb C}P \oplus {\mathbb C}P$. Also, we can choose element $1,x \in G$ as generators of these free ${\mathbb C}P$-modules.
\end{proof}

Group algebra ${\mathbb C}P$ is an extended algebra of Laurent's polynomials ${\mathbb C}[g,g^{-1},z; z^2 = 1]$.
Consider center $\cal C$ of group algebra ${\mathbb C}G$. Center ${\cal C}$ is a subalgebra of ${\mathbb C}P$ containing elements commuting with $x$.
Equivalently, element $c \in {\cal C}$ iff $xcx^{-1} = xcx = c$.
As we know, we have well-defined action of $G$ on $P$ by conjugation. Therefore, we have involution $c_x$ defined by rule:
\begin{equation}
\label{invcx}
c_x: (g,z) \mapsto (xgx^{-1},xzx^{-1}) = (xgx, xzx) = (g^{-1},z).
\end{equation}
Using this involution, we get that ${\cal C}$ is a subalgebra of ${\mathbb C}G$ consisting of $c_x$ - invariant elements.

It can be shown that $\cal C$ is a subalgebra of ${\mathbb C}P$ generated by $g + g^{-1}$ and $z$. Thus,
\begin{equation}\label{dop}
{\cal C} \cong {\mathbb C}[u = g+g^{-1}] \otimes {\mathbb C}[z, z^2 = 1].
\end{equation}

Consider ${\mathbb C}P$ as ${\cal C}$ - module. Prove the following useful fact.
\begin{Proposition}
\label{freemod}
${\mathbb C}P$ is a free ${\cal C}$ - module of rank 2.
\end{Proposition}
\begin{proof}
It is sufficient to prove that ${\mathbb C}[g,g^{-1}]$ is a free ${\cal C}$-module of rank 2.
For this purpose, consider fixed polynomial $f(g) \in {\mathbb C}[g,g^{-1}]$. Prove that there are uniquely determined elements $z_1, z_2 \in {\cal C}$ such that
\begin{equation}
f(g) = z_1 + gz_2.
\end{equation}
Multiply this identity by $x \in G$ from both sides.
We get the following identity:
\begin{equation}
xf(g)x = f(g^{-1}) = z_1 + g^{-1}z_2
\end{equation}
Thus, we get that
\begin{equation}
f(g) - f(g^{-1}) = (g-g^{-1})z_2.
\end{equation}
It follows that $z_2 = \frac{f(g) - f(g^{-1})}{g - g^{-1}} \in {\cal C}$ is a Laurent's polynomial over $g$.
Also, $z_1 = f(g) - gz_2$. Therefore, elements $z_1$ and $z_2$ are uniquely determined by $f(g)$ and, hence, ${\mathbb C}P$ is a free $\cal C$ - module of rank 2.
\end{proof}

\begin{Corollary}
\label{dimalg}
Algebra ${\mathbb C}G$ is a free ${\cal C}$ - module of rank 4.
\end{Corollary}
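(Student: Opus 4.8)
The plan is to combine the two freeness results already established into a single \emph{tower-of-freeness} (transitivity) argument. We have the Corollary stating that $\mathbb{C}G$ is a free $\mathbb{C}P$-module of rank $2$, with generators $1$ and $x$, and Proposition~\ref{freemod} stating that $\mathbb{C}P$ is a free $\mathcal{C}$-module of rank $2$, with generators $1$ and $g$. Multiplying the two bases, I expect the set $\{1, g, x, gx\}$ to be a $\mathcal{C}$-basis of $\mathbb{C}G$, giving rank $2\cdot 2 = 4$.

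First I would establish spanning. By $\mathbb{C}P$-freeness every $w \in \mathbb{C}G$ is $w = p_1\cdot 1 + p_2\cdot x$ with $p_1,p_2 \in \mathbb{C}P$, and by $\mathcal{C}$-freeness of $\mathbb{C}P$ each $p_i = c_i + c_i' g$ with $c_i, c_i' \in \mathcal{C}$. Substituting,
\[
w = c_1 + c_1' g + c_2 x + c_2' g x,
\]
so $\{1, g, x, gx\}$ spans $\mathbb{C}G$ over $\mathcal{C}$. At this point I would stress that $\mathcal{C}$ is central in $\mathbb{C}G$ by its very definition in Section~\ref{cent}; this is what guarantees that left and right $\mathcal{C}$-module structures agree and that the coefficients $c_i, c_i'$ may be moved freely past the generators, so that the products of the two bases are legitimate $\mathcal{C}$-linear combinations.

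Next I would verify $\mathcal{C}$-linear independence. Suppose $c_1 + c_1' g + c_2 x + c_2' g x = 0$ with all coefficients in $\mathcal{C}$. Regrouping as $(c_1 + c_1' g)\cdot 1 + (c_2 + c_2' g)\cdot x = 0$ and invoking that $1, x$ form a free $\mathbb{C}P$-basis forces $c_1 + c_1' g = 0$ and $c_2 + c_2' g = 0$ in $\mathbb{C}P$; then invoking that $1, g$ form a free $\mathcal{C}$-basis of $\mathbb{C}P$ forces $c_1 = c_1' = c_2 = c_2' = 0$. Hence the four elements are independent and $\mathbb{C}G$ is $\mathcal{C}$-free of rank $4$.

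The argument is the standard transitivity of freeness for a tower $\mathcal{C} \subset \mathbb{C}P \subset \mathbb{C}G$, so I do not anticipate a serious obstacle. The only point genuinely requiring care is the one flagged above: because $\mathcal{C}$ is the \emph{center} of $\mathbb{C}G$, its action commutes with the module generators, and it is precisely this centrality that lets the two bases multiply together into the single basis $\{1, g, x, gx\}$ rather than merely giving an abstract rank count.
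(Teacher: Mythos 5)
Your proof is correct and is precisely the argument the paper intends: the corollary is stated without proof as an immediate consequence of the two preceding rank-$2$ freeness results, and your transitivity-of-freeness argument with the explicit basis $\{1,g,x,gx\}$ (using that $\mathcal{C}$ is central) simply writes out that omitted step.
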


Remind the following useful notion of tensor product of modules over algebra. Let $V_1$ and $V_2$ be right and left $A$-module respectively for some algebra $A$. Denote by $V_1 \otimes_A V_2$ the quotient of $V_1 \otimes V_2$ by relations $v_1 a \otimes v_2 - v_1 \otimes a v_2$ for any $v_1 \in V_1, v_2 \in V_2, a \in A$. If $V_1$ has structure of left $A$ - module, then $V_1 \otimes_A V_2$ is $A$-module.

For fixed algebra and its representation, the space of representation is a module. If representation is irreducible, then we will say that corresponding module is irreducible. Further, let us study irreducible ${\mathbb C}G$ - modules.

Since $\cal C$ is a commutative algebra, we can consider ${\mathbb C}G$ as algebra over ${\cal C}$. Fix character $\chi \in {\rm Spec}{\cal C}$. Denote ${\cal C}$-module corresponding to $\chi$ by ${\mathbb C}^{\chi}$. It can be shown that ${\mathbb C}G \otimes_{\cal C} {\mathbb C}^{\chi}$ is an algebra over $\mathbb C$. One can check that algebra ${\mathbb C}G \otimes_{\cal C} {\mathbb C}^{\chi}$ is a quotient of ${\mathbb C}G$ by relations $z - \chi(z)\cdot 1$.

By Schur's lemma, for any central element $c \in {\cal C}$ and any irreducible representation $\rho$ of algebra ${\mathbb C}G$ matrix
$\rho(c)$ is scalar, i.e. $\rho(c) = \chi(c)1$ for some character $\chi$ of ${\cal C}$.
Thus, we get that any irreducible ${\mathbb C}G$ - module correspond to some character of ${\cal C}$. Set of characters of ${\cal C}$ is a variety ${\rm Spec} {\cal C} = {\rm Hom}_{alg}({\cal C}, {\mathbb C})$. This variety is called {\it variety of characters} of the algebra ${\cal C}$.

Let us give some description of characters of ${\cal C}$.
Any $f \in {\rm Hom}_{alg}({\cal C}, {\mathbb C})$ is defined by pair of values $f(u) = f(g+g^{-1})$ and $f(z)$. One can show that $f(u) \in {\mathbb C}$ and because of $z^2 = 1$ we get that $f(z) = \pm 1$.  
Thus, ${\rm Spec} {\cal C}$ has two components. Each of them is an affine line ${\mathbb C}^1$. Denote these components by ${\mathbb C}^1_{-}$ and ${\mathbb C}^1_{+}$ corresponding to values $f(z) = -1$ and $f(z) = 1$ respectively.

All irreducible ${\mathbb C}G$-modules correspond to fixed character $\chi$ are representations of ${\mathbb C}G \otimes_{\cal C} {\mathbb C}^{\chi}$. Thus, we have to study algebra ${\mathbb C}G \otimes_{\cal C} {\mathbb C}^{\chi}$ for description of irreducible ${\mathbb C}G$ - modules.
For studying of algebra ${\mathbb C}G \otimes_{\cal C} {\mathbb C}^{\chi}$, let us remind that Jacobson radical (or radical) of the algebra $A$ is an intersection of maximal right ideals of $A$. Equivalent formulation: Jacobson radical consists of all elements annihilated by all simple left $A$-modules. Also, note that for finite-dimensional algebras we have the following description: Jacobson radical is a maximal nilpotent ideal.

\begin{Proposition}
\label{algirr}
If $\chi(u) = \chi(g+g^{-1}) \ne \pm 2$, then algebra ${\mathbb C}G \otimes_{\cal C} {\mathbb C}^{\chi}$ is isomorphic to ${\rm Mat}_2({\mathbb C})$. If $\chi(u) = \chi(g+g^{-1}) = \pm 2$, then algebra ${\mathbb C}G \otimes_{\cal C} {\mathbb C}^{\chi}$ has 2-dimensional radical.
\end{Proposition}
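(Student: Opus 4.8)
The plan is to identify the algebra $B := {\mathbb C}G \otimes_{\cal C} {\mathbb C}^{\chi}$ explicitly as a $4$-dimensional algebra over $\mathbb C$ and then to read off its structure from the minimal polynomial of the image of $g$. By Corollary \ref{dimalg} the ring ${\mathbb C}G$ is a free ${\cal C}$-module of rank $4$, and combining the two free-module decompositions one checks that $\{1,g,x,gx\}$ is a ${\cal C}$-basis. Since ${\mathbb C}G \otimes_{\cal C} {\mathbb C}^{\chi}$ is the quotient of ${\mathbb C}G$ by the relations $z - \chi(z)\cdot 1$ and $u - \chi(u)\cdot 1$, writing $\lambda = \chi(u)$ and $\varepsilon = \chi(z)$, the central element $z$ becomes the scalar $\varepsilon$, and $B$ is generated by $g$ and $x$ subject to
\begin{equation}
g^2 = \lambda g - 1, \qquad x^2 = 1, \qquad gx + xg = \lambda x,
\end{equation}
the last relation coming from $xgx = g^{-1} = \lambda\cdot 1 - g$. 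Thus $\{1,g,x,gx\}$ is a ${\mathbb C}$-basis of $B$, so $\dim_{\mathbb C} B = 4$, and, notably, the isomorphism type of $B$ depends only on $\lambda$.

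First I would treat the generic case $\lambda \neq \pm 2$. Here the discriminant $\lambda^2 - 4$ of $t^2 - \lambda t + 1$ is nonzero, so this polynomial has two distinct roots $\mu, \mu^{-1}$ with $\mu \neq \mu^{-1}$. I then write down the candidate $2$-dimensional representation $g \mapsto \mathrm{diag}(\mu, \mu^{-1})$ and $x \mapsto \left(\begin{smallmatrix}0 & 1\\ 1 & 0\end{smallmatrix}\right)$, and verify directly that it respects the three defining relations, giving an algebra homomorphism $B \to {\rm Mat}_2({\mathbb C})$. Because the images of $1, g, x, gx$ are four linearly independent matrices (the two diagonal and the two antidiagonal ones being independent precisely when $\mu \neq \mu^{-1}$), the map is surjective; the dimension count $\dim_{\mathbb C} B = 4 = \dim_{\mathbb C} {\rm Mat}_2({\mathbb C})$ upgrades this to an isomorphism $B \cong {\rm Mat}_2({\mathbb C})$.

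Next I would handle the degenerate case $\lambda = \pm 2$, where $t^2 - \lambda t + 1 = (t - \eta)^2$ has a double root $\eta = \pm 1$ with $\lambda = 2\eta$. Setting $n := g - \eta$ gives $n^2 = 0$, and substituting $g = \eta + n$ into $gx + xg = \lambda x$ collapses the relation to the anticommutation $nx + xn = 0$. I then propose the two-dimensional subspace $J = {\mathbb C}n \oplus {\mathbb C}\,nx$ as the radical: using $n^2 = 0$, $x^2 = 1$, and $nx = -xn$ one checks $J^2 = 0$ and that $J$ is stable under left and right multiplication by $g$ and $x$, hence a nilpotent two-sided ideal, so $J \subseteq J(B)$. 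Finally $B/J$ is spanned by the images of $1$ and $x$ with $x^2 = 1$, so $B/J \cong {\mathbb C}[x]/(x^2 - 1) \cong {\mathbb C}\oplus{\mathbb C}$ is semisimple; this forces $J(B) = J$, a $2$-dimensional radical.

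I expect the routine but error-prone part to be the bookkeeping verifications: confirming that $\{1,g,x,gx\}$ really descends to a basis of $B$ (so the dimension is exactly $4$ and no further collapse occurs), and checking that $J$ is genuinely a \emph{two-sided} ideal rather than merely a subalgebra. The genuine conceptual pivot, however, is the observation that everything is governed by whether $t^2 - \lambda t + 1$ has a repeated root, i.e. by the vanishing of $\lambda^2 - 4$; once the case split on $\lambda = \pm 2$ is made, both branches reduce to short explicit computations.
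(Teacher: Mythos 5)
Your proposal is correct and follows essentially the same route as the paper: both identify ${\mathbb C}G\otimes_{\cal C}{\mathbb C}^{\chi}$ as the $4$-dimensional algebra on $\{1,g,x,gx\}$ with $g+g^{-1}=\chi(u)$, $xgx=g^{-1}$, exhibit the explicit $2\times 2$ representation with $g\mapsto\mathrm{diag}(t,t^{-1})$ in the generic case, and take the nilpotent ideal generated by $g\mp 1$ in the degenerate case. Your way of pinning down the radical (showing $B/J\cong{\mathbb C}\oplus{\mathbb C}$ is semisimple, hence $J(B)=J$) is a slightly cleaner justification than the paper's appeal to non-nilpotency of $1$ and $x$, but the argument is the same in substance.
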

\begin{proof}
Using corollary \ref{dimalg}, we get that ${\rm dim}_{\mathbb C}{\mathbb C}G \otimes_{\cal C} {\mathbb C}^{\chi} = 4$. It can be shown that this algebra has the following basis: $1,x,g,xg$. There are relations: $x^2 = 1, g+g^{-1} = \chi(u)1, xgx = g^{-1}, \chi(u) \in {\mathbb C}$. Denote by $t$ the root of equation: $t+t^{-1} = \chi(u)$. We have the following morphism: $F: {\mathbb C}G \otimes_{\cal C} {\mathbb C}^{\chi} \to {\rm Mat}_2({\mathbb C})$ defined by rule:
\begin{equation}
F: x \mapsto
\begin{pmatrix}
0 & 1\\
1 & 0
\end{pmatrix}
,
g \mapsto
\begin{pmatrix}
t & 0\\
0 & t^{-1}
\end{pmatrix}
\end{equation}
One can check that if $\chi(u) \ne \pm 2$, then $F$ is an isomorphism. If $\chi(u) = 2$, then $g+g^{-1} = 2$ and, hence, $(g-1)^2 = 0$. Consider two-sided ideal $J$ of ${\mathbb C}G \otimes_{\cal C} {\mathbb C}^{\chi}$ generated by $g-1$. It can be shown that $J$ has a basis $x(g-1), g-1$ over ${\mathbb C}$. Also, consider $J^2 = \langle j_1 \cdot j_2| j_1, j_2 \in J \rangle$. One can check that $J^2 = 0$. Thus, ${\rm dim}_{\mathbb C}J = 2$. Note that element $x$ and $1$ are not in Jacobson radical, because these element are not nilpotent. Thus, $J$ is a maximal nilpotent ideal of ${\mathbb C}G \otimes_{\cal C} {\mathbb C}^{\chi}$. Therefore, $J$ is a Jacobson radical.
\end{proof}

\begin{Corollary}
\label{irr}
Dimension of irreducible representation of group $G$ is less or equal to 2.
\end{Corollary}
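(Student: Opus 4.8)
The plan is to read off the bound directly from Proposition \ref{algirr}, combined with the two facts recorded immediately before it: by Schur's lemma every irreducible ${\mathbb C}G$-module corresponds to a single character $\chi$ of the center $\cal C$, and all such modules are in fact modules over the finite-dimensional algebra $A_{\chi} := {\mathbb C}G \otimes _{\cal C} {\mathbb C}^{\chi}$. It therefore suffices to bound the dimensions of the simple $A_{\chi}$-modules, and Proposition \ref{algirr} separates this into two cases according to the value of $\chi(u) = \chi(g + g^{-1})$.

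First I would dispose of the generic case $\chi(u) \ne \pm 2$. Here Proposition \ref{algirr} identifies $A_{\chi}$ with ${\rm Mat}_2({\mathbb C})$, which is simple and possesses, up to isomorphism, exactly one irreducible module, namely the $2$-dimensional column space ${\mathbb C}^2$. Hence every irreducible representation attached to such a $\chi$ has dimension exactly $2$.

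Next I would treat the degenerate case $\chi(u) = \pm 2$. By Proposition \ref{algirr} the algebra $A_{\chi}$ is $4$-dimensional and carries a $2$-dimensional Jacobson radical $J$, so every simple $A_{\chi}$-module factors through the semisimple quotient $A_{\chi}/J$, which is $2$-dimensional. The step that actually needs checking is that this quotient is commutative: since $\chi(u) = \pm 2$ forces $(g \mp 1)^2 = 0$, the nilpotent element $g \mp 1$ lies in $J$, so the image of $g$ in $A_{\chi}/J$ is the scalar $\pm 1$; consequently $A_{\chi}/J$ is generated by the image of $x$ subject to $x^2 = 1$, whence $A_{\chi}/J \cong {\mathbb C}[x]/(x^2 - 1) \cong {\mathbb C} \oplus {\mathbb C}$. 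A commutative semisimple algebra has only $1$-dimensional simple modules, so here the irreducible representations are $1$-dimensional.

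Putting the two cases together, every irreducible representation of $G$ has dimension at most $2$. The only genuinely substantive point is the identification of the semisimple quotient $A_{\chi}/J$ in the degenerate case; the rest is a routine appeal to Schur's lemma and the Wedderburn description already used in the text.
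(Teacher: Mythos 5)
Your proposal is correct and follows essentially the same route as the paper: reduce via Schur's lemma to the finite-dimensional algebras ${\mathbb C}G \otimes_{\cal C} {\mathbb C}^{\chi}$ and then invoke Proposition \ref{algirr} in the two cases $\chi(u) \ne \pm 2$ and $\chi(u) = \pm 2$. The only difference is that you explicitly verify that the semisimple quotient in the degenerate case is ${\mathbb C}[x]/(x^2-1) \cong {\mathbb C}\oplus{\mathbb C}$, a detail the paper's proof asserts without computation; this is a welcome but minor addition.
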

\begin{proof}
Irreducible representations of ${\mathbb C}G$ are irreducible representations of ${\mathbb C}G \otimes_{\cal C} {\mathbb C}^{\chi}$ for some $\chi \in {\rm Spec}{\cal C}$.  Using proposition \ref{algirr}, we get that irreducible representations of ${\mathbb C}G \otimes_{\cal C} {\mathbb C}^{\chi}$ is two-dimensional if $\chi(u) \ne \pm 2$. If $\chi(u) = \pm 2$, then irreducible representations of ${\mathbb C}G$ are one-dimensional.
\end{proof}

\section{Description and properties of irreducible ${\mathbb C}G$ - modules.}

In this section we will describe irreducible ${\mathbb C}G$ - modules in terms of ${\mathbb C}P$ - modules.

Assume that $V$ is an irreducible ${\mathbb C}G$ - module. It follows from Corrolary \ref{irr} that ${\rm dim}_{\mathbb C}V \le 2$. Also, recall that algebra
${\mathbb C}P$ is an extended algebra of Laurent polynomials: ${\mathbb C}[g^{\pm 1}] \otimes {\mathbb C}[z, z^2 = 1]$.

Consider variety ${\rm Spec} {\mathbb C}P = {\rm Hom}_{alg}({\mathbb C}[g^{\pm 1}] \otimes {\mathbb C}[z, z^2 = 1],{\mathbb C})$. Then, $\chi(z) = \pm 1$ for any $\chi \in {\rm Spec} {\mathbb C}P$.
Thus, variety ${\rm Spec} {\mathbb C}P$ has two components, each of them is an one-dimensional algebraic torus ${\mathbb C}^*$. Components correspond to $\chi(z) = \pm 1$. We will denote these components by ${\mathbb C}^*_{-}$ and ${\mathbb C}^*_{+}$.
We have the following description of $\chi \in {\rm Spec} {\mathbb C}P$: $\chi(g,z) = (\chi(g) = a, \chi(z) = \pm 1) \in {\mathbb C}^* \times \{{\pm 1}\}$. Remind that ${\cal C}$ is an algebra of invariants of ${\mathbb C}P$ under action of involution $c_x$ defined by rule (\ref{invcx}). Also, with action of $c_x$ on ${\mathbb C}P$, we have induced action of $c_x$ on ${\rm Spec}{\mathbb C}P$. For character $\chi \in {\rm Spec}{\mathbb C}P$, we have the following identity: $c_x(\chi)(g) = \chi(xgx) = \chi(g^{-1}) = \chi^{-1}(g)$, i.e. $c_x(\chi) = \chi^{-1}$.
We have the following surjective morphism of varieties:
\begin{equation}
pr:{\rm Spec}{\mathbb C}P \to {\rm Spec}{\cal C}
\end{equation}
defined by rule:
$pr:(a, 1) \mapsto (a+a^{-1}, 1)$ and $pr:(a, -1) \mapsto (a+a^{-1}, -1)$. One can check that involution $c_x$ acts on the fibers of the morphism $pr$.
Also, one can say that variety ${\rm Spec}{\cal C}$ is a quotient of ${\rm Spec}{\mathbb C}P$ by involution $c_x$. Recall that ${\rm Spec}{\mathbb C}P$ and ${\rm Spec}{\cal C}$ are disjoint unions ${\mathbb C}^*_{-} \cup {\mathbb C}^*_{+}$ and ${\mathbb C}^1_{-} \cup {\mathbb C}^1_{+}$ respectively. Components ${\mathbb C}^*_{\pm}$ and ${\mathbb C}^1_{\pm}$ correspond to different values of $\chi(z) = \pm 1$. Morphism $pr$ decomposes into $pr: {\mathbb C}^*_{-} \to {\mathbb C}^1_{-}$ and $pr: {\mathbb C}^*_{+} \to {\mathbb C}^1_{+}$.

Consider ideal $I$ of ${\mathbb C}P$ generated by elements $(g - \chi(g) \cdot 1)^k$ and $z - \chi(z) \cdot 1$.
Denote by ${\mathbb C}^{\chi}(k)$ the quotient of ${\mathbb C}P$-module ${\mathbb C}P/I$. It follows that ${\mathbb C}^{\chi} = {\mathbb C}^{\chi}(1)$. 
Consider finite-dimensional ${\mathbb C}G$ - module $V$. Denote by $V_{\chi}(k)$ the following subspace of $V$:
\begin{equation}
V_{\chi}(k) = \{v \in V| (g - \chi(g)\cdot 1)^k v = 0, (z - \chi(z)\cdot 1) v = 0\}
\end{equation}
It follows that $V_{\chi}(k) \subseteq V_{\chi}(k+1)$ for any integer k. Also, note that if $V_{\chi}(1) = 0$ then $V_{\chi}(k) = 0$ for any $k$.
Actually, if we consider restriction of $g$ on $V_{\chi}(k)$, then this restriction has eigenvector. Thus, if $V_{\chi}(k) \ne 0$ then $V_{\chi}(1) \ne 0$.

Since $V$ is finite-dimensional, we get that there is a minimal $k_0$ such that $V_{\chi}(k_0) = V_{\chi}(m)$ for $m \ge k_0$. Denote by ${\rm Char}(V)$ the set of characters $\chi \in {\rm Spec}{\mathbb C}P$ such that $V_{\chi}(1) \ne 0$.

We have the following famous fact:
\begin{Proposition}
\begin{itemize}
\item{
Consider ${\mathbb C}P$-module $V$ such that ${\rm Char}(V) = \{\chi\}$. Then there is a decomposition 
\begin{equation}
V = {\mathbb C}^{\chi}(k_1) \oplus ... \oplus {\mathbb C}^{\chi}(k_{q}).
\end{equation}
for arbitrary integers $k_1,...,k_q$.
}
\item{
Consider finite-dimensional ${\mathbb C}P$ - module $W$. There are ${\mathbb C}P$-modules $V_1,...,V_s$ and characters $\chi_1,...,\chi_s$ such that
\begin{equation}
W \cong V_1 \oplus ... \oplus V_s
\end{equation}
and ${\rm Char}V_i = \{\chi_i\}, i = 1,...,s$.
}
\end{itemize}
\end{Proposition}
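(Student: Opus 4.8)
The plan is to exploit the explicit description ${\mathbb C}P \cong {\mathbb C}[g^{\pm1}]\otimes {\mathbb C}[z,z^2=1]$. Splitting off the idempotents $\tfrac{1}{2}(1+z)$ and $\tfrac{1}{2}(1-z)$ identifies ${\mathbb C}P$ with a product of two copies of the Laurent polynomial ring ${\mathbb C}[g^{\pm1}]$, which is a principal ideal domain; equivalently, a finite-dimensional ${\mathbb C}P$-module is simply a finite-dimensional vector space carrying two commuting operators $g$ and $z$ with $g$ invertible and $z^2=1$. Both assertions are then instances of the Jordan and primary decompositions, and I would arrange the argument so that the second (global) item produces the summands with single-character support and the first (local) item puts each such summand into the normal form ${\mathbb C}^{\chi}(k)$.

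For the second item I would first use $z^2=1$ to diagonalize $z$, writing $W=W_{+}\oplus W_{-}$ with $W_{\pm}=\ker(z\mp 1)$; since $g$ commutes with $z$, each $W_{\epsilon}$ is a ${\mathbb C}P$-submodule on which $z$ acts by the scalar $\epsilon$. Next, on the finite-dimensional space $W_{\epsilon}$ I would decompose with respect to the invertible operator $g$ into its generalized eigenspaces, $W_{\epsilon}=\bigoplus_{a}W_{\epsilon}^{(a)}$ with $W_{\epsilon}^{(a)}=\{w: (g-a)^{N}w=0,\ N\gg 0\}$ and $a\in{\mathbb C}^{*}$. Each $W_{\epsilon}^{(a)}$ is $g$-invariant and carries the scalar action $z=\epsilon$, hence is a ${\mathbb C}P$-submodule with ${\rm Char}=\{(a,\epsilon)\}$; collecting these pieces as $V_1,\dots,V_s$ yields $W\cong V_1\oplus\cdots\oplus V_s$ with ${\rm Char}(V_i)=\{\chi_i\}$.

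For the first item, suppose ${\rm Char}(V)=\{\chi\}$ with $\chi=(a,\epsilon)$. I would first show $z=\epsilon\cdot 1$ on $V$: since $z$ is diagonalizable and every $z$-eigenvector lies in some $V_{\chi'}(1)$, the hypothesis that only $\chi$ occurs forces the $z$-eigenvalue to be $\epsilon$ throughout; the same reasoning applied to eigenvectors of $g$ shows that $a$ is the unique eigenvalue of $g$, so $N:=g-a$ is nilpotent. Thus $V$ is a finite-dimensional ${\mathbb C}[N]$-module with $N$ nilpotent, and Jordan normal form gives $V\cong\bigoplus_i {\mathbb C}[N]/(N^{k_i})$ for suitable integers $k_1,\dots,k_q$. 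Translating back via $g=a+N$ and $z=\epsilon$, each cyclic summand is exactly ${\mathbb C}P/\bigl((g-a)^{k_i},\,z-\epsilon\bigr)={\mathbb C}^{\chi}(k_i)$, which is the asserted decomposition.

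Since both parts reduce to standard linear algebra, namely the simultaneous block-decomposition of a pair of commuting operators followed by the Jordan form of a nilpotent operator, there is no deep obstacle here. The only points that demand care are bookkeeping: verifying that each generalized eigenspace is stable under the whole algebra (which follows from $gz=zg$), checking that the character of a summand is a single pair $(a,\pm1)$ rather than a larger set, and matching the abstract cyclic module ${\mathbb C}[N]/(N^{k})$ with the concrete quotient ${\mathbb C}^{\chi}(k)$ defined earlier in the text.
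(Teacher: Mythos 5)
Your argument is correct, but note that the paper itself offers no proof of this proposition at all: it is introduced with the words ``We have the following famous fact'' and left as a citation-free appeal to the standard structure theory of finite-dimensional modules over ${\mathbb C}[g^{\pm 1}]\otimes{\mathbb C}[z,z^2=1]$ (equivalently, Jordan/primary decomposition for a pair of commuting operators with $g$ invertible and $z$ an involution). So you have supplied exactly the elementary linear-algebra proof the authors are implicitly invoking: split $W$ by the $\pm 1$ eigenspaces of the diagonalizable involution $z$, refine by generalized eigenspaces of $g$ to get the single-character summands, and then identify each single-character piece with a sum of ${\mathbb C}^{\chi}(k_i)$ via the Jordan form of the nilpotent operator $g-\chi(g)$, matching ${\mathbb C}[N]/(N^{k})$ with ${\mathbb C}P/\bigl((g-\chi(g))^{k},\,z-\chi(z)\bigr)$. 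Two small points of hygiene: (i) in the first item you should say explicitly that $V$ is finite-dimensional (the paper's hypothesis is implicit but necessary for the Jordan form step); (ii) your sentence ``every $z$-eigenvector lies in some $V_{\chi'}(1)$'' is not literally true --- a sum of $g$-eigenvectors for distinct eigenvalues is a $z$-eigenvector lying in no $V_{\chi'}(1)$ --- and should be replaced by the observation that each $z$-eigenspace is $g$-invariant and therefore contains a joint eigenvector, which is what actually forces the $z$-eigenvalue (and then the $g$-eigenvalue) to be the one prescribed by $\chi$. With those adjustments the proof is complete and fills a gap the paper leaves open.
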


\begin{Proposition}
\label{symmiv}
Consider finite-dimensional ${\mathbb C}G$ - module $V$. We have the following identity:
$$
{\rm dim}_{\mathbb C}V_{\chi}(k) = {\rm dim}_{\mathbb C}V_{c_x(\chi)}(k)
$$
for any $k$. In particular, $\chi \in {\rm Char}(V)$ iff $x(\chi) = \chi^{-1} \in {\rm Char}(V)$ and ${\rm Char}({\mathbb C}G \otimes_{{\mathbb C}P} {\mathbb C}^{\chi}) = (\chi, \chi^{-1})$
\end{Proposition}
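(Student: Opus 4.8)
The plan is to use the element $x \in G$ as an intertwining operator that interchanges the generalized weight spaces attached to $\chi$ and $c_x(\chi) = \chi^{-1}$. Since $V$ is a $\mathbb{C}G$-module, left multiplication by $x$ is a linear automorphism of $V$ with $x^2 = 1$; I claim it carries $V_{\chi}(k)$ bijectively onto $V_{c_x(\chi)}(k)$, which at once yields the asserted dimension identity.

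To verify the claim I would check the two defining conditions of $V_{c_x(\chi)}(k)$ on $xv$ for $v \in V_{\chi}(k)$. For the $z$-condition, recall $c_x(\chi)(z) = \chi(z)$ and that $x$ commutes with $z$; hence $(z - \chi(z))(xv) = x(z-\chi(z))v = 0$. For the $g$-condition, write $a = \chi(g)$, so $c_x(\chi)(g) = a^{-1}$. The relation $xgx = g^{-1}$ gives $x^{-1}(g - a^{-1})^k x = (g^{-1} - a^{-1})^k$, and the crucial algebraic observation is the operator factorization
\[
g^{-1} - a^{-1} = -a^{-1} g^{-1}(g - a),
\]
so that $(g^{-1}-a^{-1})^k = (-a^{-1})^k g^{-k}(g-a)^k$ because $g^{-1}$ commutes with $g-a$. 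Consequently
\[
(g - a^{-1})^k(xv) = x(g^{-1}-a^{-1})^k v = (-a^{-1})^k\, x\, g^{-k}(g-a)^k v = 0,
\]
proving $xv \in V_{c_x(\chi)}(k)$. Since $c_x$ is an involution, applying $x$ a second time returns into $V_{\chi}(k)$, so $x$ restricts to mutually inverse bijections and $\dim_{\mathbb{C}} V_{\chi}(k) = \dim_{\mathbb{C}} V_{c_x(\chi)}(k)$. Specializing to $k = 1$ gives the equivalence $\chi \in \mathrm{Char}(V) \Leftrightarrow \chi^{-1} \in \mathrm{Char}(V)$.

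For the final assertion I would compute the induced module $M = \mathbb{C}G \otimes_{\mathbb{C}P} \mathbb{C}^{\chi}$ directly. By the corollary that $\mathbb{C}G$ is free of rank $2$ over $\mathbb{C}P$ with basis $\{1,x\}$, the space $M$ is $2$-dimensional with basis $1 \otimes 1$ and $x \otimes 1$. On $1 \otimes 1$ every $p \in P$ acts by $\chi(p)$, so $1 \otimes 1 \in M_{\chi}(1)$. On $x \otimes 1$, using normality of $P$ to write $px = x\,c_x(p)$, one gets $p\cdot(x\otimes 1) = x \otimes c_x(p)\cdot 1 = \chi(c_x(p))\,(x\otimes 1) = c_x(\chi)(p)\,(x\otimes 1)$, so $x\otimes 1 \in M_{\chi^{-1}}(1)$. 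Hence $\mathrm{Char}(M) = \{\chi, \chi^{-1}\}$.

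The main obstacle, and really the only nonroutine point, is recognizing the factorization $g^{-1}-a^{-1} = -a^{-1}g^{-1}(g-a)$, which is exactly what guarantees that the image under $x$ of a generalized eigenvector of order $k$ for eigenvalue $a$ is again a generalized eigenvector of the \emph{same} order $k$ for the inverse eigenvalue $a^{-1}$, rather than merely an honest eigenvector; without it one would only control $\mathrm{Char}$ and not the finer spaces $V_\chi(k)$. I would also keep in mind the degenerate case $\chi = \chi^{-1}$ (that is, $\chi(g) = \pm 1$), where the two weights coincide and $\mathrm{Char}(M)$ collapses to the singleton $\{\chi\}$; this is consistent with the computation above.
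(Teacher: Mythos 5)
Your proof is correct and follows essentially the same route as the paper: left multiplication by $x$ intertwines $V_{\chi}(k)$ and $V_{c_x(\chi)}(k)$ via the identity $g^{-1}-a^{-1}=-a^{-1}g^{-1}(g-a)$ (the paper writes the equivalent factorization of $(g^{-1}-\chi(g))^{k}$, up to an immaterial scalar it drops), and the final claim is verified on the basis $1\otimes v$, $x\otimes v$ of the induced module exactly as in the paper. Your write-up is in fact slightly more complete, since you also check the $z$-condition explicitly and note the degenerate case $\chi=\chi^{-1}$.
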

\begin{proof}
Assume that $v \in V_{\chi}(k)$. Consider vector $xv$.  We have the following identities:
\begin{equation}
x(g - \chi(g)\cdot 1)^k v = (g^{-1} - \chi(g) \cdot 1)^k xv = (-1)^kg^{-k}(g - \chi^{-1}(g)\cdot 1)^k xv = 0
\end{equation}
Thus, $x (V_{\chi}(k)) = V_{c_x(\chi)}(k)$.
One can check that ${\rm Char}({\mathbb C}G \otimes_{{\mathbb C}P} {\mathbb C}^{\chi}) = (\chi, \chi^{-1})$. In fact, assume that $v \in {\mathbb C}^{\chi}$ such that $g v = \chi(g) v$. Thus, we can choose the following basis of ${\mathbb C}G \otimes_{{\mathbb C}P} {\mathbb C}^{\chi}$: $1 \otimes v$ and $x \otimes v$. And one can check that $V_{\chi}$ and $V_{c_x(\chi)}$ are one-dimensional spaces ${\mathbb C}(1 \otimes v)$ and ${\mathbb C}(x \otimes v)$ respectively.
\end{proof}
\begin{Corollary}
\label{isomor}
We have the following isomorphism: ${\mathbb C}G \otimes_{{\mathbb C}P} {\mathbb C}^{x(\chi)} \cong {\mathbb C}G \otimes_{{\mathbb C}P} {\mathbb C}^{\chi}$. Also, ${\mathbb C}G \otimes_{{\mathbb C}P} {\mathbb C}^{\chi}$ as ${\mathbb C}P$-module is isomorphic to direct sum ${\mathbb C}^{\chi} \oplus {\mathbb C}^{c_x(\chi)}$.
\end{Corollary}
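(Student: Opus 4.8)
The plan is to work with the concrete free basis $\{1,x\}$ of ${\mathbb C}G$ viewed as a right ${\mathbb C}P$-module. This basis is available because $G = P \sqcup Px$ and, $P$ being normal in $G$, every element $g'x$ can be rewritten as $x\,c_x(g')$ with $c_x(g') \in P$; hence each group element is of the form $1\cdot p$ or $x\cdot p$ with $p\in P$. Tensoring the resulting splitting ${\mathbb C}G = 1\cdot{\mathbb C}P \oplus x\cdot{\mathbb C}P$ over ${\mathbb C}P$ with ${\mathbb C}^{\chi}$ yields the two-dimensional space ${\mathbb C}G \otimes_{{\mathbb C}P} {\mathbb C}^{\chi}$ with ${\mathbb C}$-basis $1\otimes v,\ x\otimes v$, where $v$ generates ${\mathbb C}^{\chi}$. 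This is precisely the basis already extracted in the proof of Proposition \ref{symmiv}, so I would take it as the starting point.

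For the ${\mathbb C}P$-module decomposition I would compute the left action of the generators $g,z$ of $P$ on the two basis vectors, carrying scalars across the tensor by means of the defining relations $xgx = g^{-1}$ and $xz = zx$. On the first vector one finds $g\cdot(1\otimes v) = 1\otimes gv = \chi(g)(1\otimes v)$ and similarly $z\cdot(1\otimes v)=\chi(z)(1\otimes v)$, so ${\mathbb C}(1\otimes v)\cong{\mathbb C}^{\chi}$. On the second vector the relation $gx = xg^{-1}$ gives $g\cdot(x\otimes v)=xg^{-1}\otimes v=\chi(g)^{-1}(x\otimes v)$, and $zx = xz$ gives $z\cdot(x\otimes v)=\chi(z)(x\otimes v)$; since $c_x(\chi)(g)=\chi(g)^{-1}$ and $c_x(\chi)(z)=\chi(z)$, this shows ${\mathbb C}(x\otimes v)\cong{\mathbb C}^{c_x(\chi)}$. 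The two lines are ${\mathbb C}P$-submodules whose direct sum exhausts the space, giving ${\mathbb C}G\otimes_{{\mathbb C}P}{\mathbb C}^{\chi}\cong{\mathbb C}^{\chi}\oplus{\mathbb C}^{c_x(\chi)}$ as claimed.

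For the first isomorphism I would write down an explicit ${\mathbb C}G$-linear map. Letting $v'$ generate ${\mathbb C}^{x(\chi)}$, I would define $\Psi(h\otimes v)=hx\otimes v'$. It is automatically ${\mathbb C}G$-linear, because left multiplication commutes with the fixed right factor $x$, and it carries the basis $\{1\otimes v,\ x\otimes v\}$ to $\{x\otimes v',\ 1\otimes v'\}$ (using $x^2=1$), so it is a bijection once it is well defined.

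The step I expect to be the crux is verifying that $\Psi$ respects the balancing relation of $\otimes_{{\mathbb C}P}$. Concretely, for $p\in P$ one must match $\Psi(hp\otimes v)=\chi(p)\,(hx\otimes v')$ against the direct evaluation $\Psi(hp\otimes v)=hpx\otimes v'$. Rewriting $px = x\,c_x(p)$ turns the latter into $hx\otimes c_x(p)v' = x(\chi)(c_x(p))\,(hx\otimes v')$, and the needed identity $x(\chi)(c_x(p))=\chi(p)$ is exactly the statement that $c_x$ is an involution, since $x(\chi)=c_x(\chi)$ and $c_x(\chi)(c_x(p))=\chi(p)$. Once this compatibility is in place, $\Psi$ is the desired isomorphism. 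I note that the existence of such an isomorphism is also forced abstractly by Proposition \ref{symmiv}, which shows both induced modules have the same set of characters $\{\chi,\chi^{-1}\}$; but the explicit $\Psi$ is self-contained and avoids appealing to a classification.
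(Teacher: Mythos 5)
Your proposal is correct and follows essentially the same route as the paper: the corollary is deduced from the basis $1\otimes v,\ x\otimes v$ and the identification of the $\chi$- and $c_x(\chi)$-eigenlines established in the proof of Proposition \ref{symmiv}. Your only addition is to write out the explicit intertwiner $\Psi(h\otimes v)=hx\otimes v'$ and verify the balancing relation, which the paper leaves implicit; the verification is accurate.
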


Thus, we can deduce the following
\begin{Corollary}
Consider irreducible ${\mathbb C}G$ - module $V$. Assume that $\chi \in {\rm Char}(V)$ and $\chi(g) \ne \pm 1$. Then ${\mathbb C}G$ - module $V$ is isomorphic to ${\mathbb C}G \otimes_{{\mathbb C}P} {\mathbb C}^{\chi}$.
\end{Corollary}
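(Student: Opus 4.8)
The plan is to construct an explicit nonzero $\mathbb{C}G$-module homomorphism from the induced module onto $V$ and then to verify it is an isomorphism by a dimension count. Since $\chi \in \mathrm{Char}(V)$, by definition $V_{\chi}(1) \ne 0$, so I may fix a nonzero vector $v \in V$ with $gv = \chi(g)v$ and $zv = \chi(z)v$; equivalently, $v$ spans a $\mathbb{C}P$-submodule of the restriction of $V$ isomorphic to $\mathbb{C}^{\chi}$. Writing $e$ for a spanning vector of $\mathbb{C}^{\chi}$, this datum is precisely a nonzero element of $\mathrm{Hom}_{\mathbb{C}P}(\mathbb{C}^{\chi}, V)$, which by the induction--restriction adjunction corresponds to a $\mathbb{C}G$-module map
$$
\Psi: \mathbb{C}G \otimes_{\mathbb{C}P} \mathbb{C}^{\chi} \to V .
$$
Concretely I would set $\Psi(a \otimes e) = av$ for $a \in \mathbb{C}G$; the only thing to verify is that this respects the defining relations of the tensor product over $\mathbb{C}P$, i.e. $\Psi(ap \otimes e) = \Psi(a \otimes \chi(p)e)$ for $p \in \mathbb{C}P$, which holds because $g$ and $z$ act on $v$ by the scalars $\chi(g), \chi(z)$.

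Next I would deduce surjectivity of $\Psi$ from irreducibility of $V$: the image is a nonzero $\mathbb{C}G$-submodule, since it contains $v = \Psi(1 \otimes e) \ne 0$, so it must be all of $V$. It then remains to compare dimensions. By Corollary \ref{isomor} the source $\mathbb{C}G \otimes_{\mathbb{C}P} \mathbb{C}^{\chi}$ is, as a $\mathbb{C}P$-module, isomorphic to $\mathbb{C}^{\chi} \oplus \mathbb{C}^{c_x(\chi)}$, hence two-dimensional. A surjection out of a two-dimensional space is an isomorphism as soon as the target is also two-dimensional, so everything reduces to proving $\dim_{\mathbb{C}} V = 2$.

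The key point, and the only place where the hypothesis $\chi(g) \ne \pm 1$ enters, is exactly this dimension count. Corollary \ref{irr} already gives $\dim_{\mathbb{C}} V \le 2$, so I only need to exclude the case $\dim_{\mathbb{C}} V = 1$. If $V$ were one-dimensional it would be given by a character $G \to \mathbb{C}^{*}$; applying it to the relation $xgx = g^{-1}$ forces the scalar $\lambda$ by which $g$ acts to satisfy $\lambda = \lambda^{-1}$, i.e. $\lambda = \pm 1$. But on the line $V = V_{\chi}(1)$ the element $g$ acts by $\chi(g)$, so this would give $\chi(g) = \pm 1$, contradicting the hypothesis. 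Equivalently, by Proposition \ref{symmiv} the set $\mathrm{Char}(V)$ is stable under $c_x$, and since $\chi(g) \ne \pm 1$ the characters $\chi$ and $c_x(\chi) = \chi^{-1}$ are distinct, which already forces $\dim_{\mathbb{C}} V \ge 2$. Hence $\dim_{\mathbb{C}} V = 2 = \dim_{\mathbb{C}}(\mathbb{C}G \otimes_{\mathbb{C}P} \mathbb{C}^{\chi})$ and the surjection $\Psi$ is the desired isomorphism. I expect the dimension argument, and in particular the clean exclusion of the one-dimensional case via $\chi(g) \ne \pm 1$, to be the only genuinely substantive step; the construction of $\Psi$ and its surjectivity are formal.
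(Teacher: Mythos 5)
Your proof is correct and follows essentially the same route as the paper's: both pick an eigenvector $v$ with $gv=\chi(g)v$, $zv=\chi(z)v$, consider the submodule spanned by $v$ and $xv$ (your $\Psi$ is exactly the map sending $1\otimes e\mapsto v$, $x\otimes e\mapsto xv$), and use Proposition \ref{symmiv} together with $\chi\ne\chi^{-1}$ (from $\chi(g)\ne\pm1$) and Corollary \ref{irr} to force $\dim_{\mathbb C}V=2$. Your write-up is in fact a little tighter than the paper's, since the surjection-plus-dimension-count argument makes the injectivity of the comparison map (which the paper leaves as ``one can show'') automatic.
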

\begin{proof}
Firstly, consider $V = {\mathbb C}G \otimes_{{\mathbb C}P} {\mathbb C}^{\chi}$.
Assume that $W \subseteq V$ is a ${\mathbb C}G$ - submodule.
Consider ${\rm Char}(W)$. Clearly, ${\rm Char}(W) \subseteq {\rm Char}(V)$. Since $\chi(g) \ne \pm 1$, then $c_x(\chi)(g) \ne \chi(g)$, i.e. $c_x(\chi) = \chi^{-1} \ne \chi$.
Using proposition \ref{symmiv}, we get that ${\rm Char}(W) = {\rm Char}(V)$ and, hence, ${\rm dim}_{\mathbb C}W = 2$. Thus, $W = V$.

Conversely, let $V$ be irreducible ${\mathbb C}G$ - module. As we know from corollary \ref{irr}, ${\rm dim}_{\mathbb C}V \le 2$. If $\chi(g) \ne \pm 1$, then $\chi^{-1} \ne \chi$ and ${\rm Char}(V) = (\chi, \chi^{-1})$. Thus, ${\rm dim}_{\mathbb C}V = 2$. Let $v$ be a vector such that $g v = \chi(g) v$. One can show that $v$ and $xv$ generate ${\mathbb C}G$ - submodule isomorphic to ${\mathbb C}G \otimes_{{\mathbb C}P} {\mathbb C}^{\chi}$.
Using irreducibility of $V$, we get the required.
\end{proof}

Consider the case $\chi(g) = \pm 1$. We can prove the following corollary:
\begin{Corollary}
\label{cpmf}
Assume that $\chi(g) = \pm 1$. In this case ${\mathbb C}G$ - module $V = {\mathbb C}G \otimes_{{\mathbb C}P}{\mathbb C}^{\chi}$ is a direct sum of one dimensional
modules: ${\mathbb C}^{\chi}_{1} \oplus {\mathbb C}^{\chi}_{-1}$, where ${\mathbb C}^{\chi}_{1}$ and ${\mathbb C}^{\chi}_{-1}$ are subspaces generated by eigenvector of $x$ corresponding to eigenvalues $1$ and $-1$ respectively.
\end{Corollary}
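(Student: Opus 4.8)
The plan is to reduce the whole question to the action of the single generator $x$, using the hypothesis $\chi(g)=\pm 1$ to force $g$ and $z$ to act by scalars on all of $V$. First I would record the key consequence of the hypothesis: since $\chi(g)=\pm 1$ we have $\chi(g)^{-1}=\chi(g)$, and because $c_x(\chi)(g)=\chi(g^{-1})=\chi(g)^{-1}$ while $c_x(\chi)(z)=\chi(z)$ by (\ref{invcx}), it follows that $c_x(\chi)=\chi^{-1}=\chi$. By Corollary \ref{isomor}, $V={\mathbb C}G\otimes_{{\mathbb C}P}{\mathbb C}^{\chi}$ is, as a ${\mathbb C}P$-module, isomorphic to ${\mathbb C}^{\chi}\oplus{\mathbb C}^{c_x(\chi)}={\mathbb C}^{\chi}\oplus{\mathbb C}^{\chi}$. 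Hence on the two-dimensional space $V$ both $g$ and $z$ act as the scalars $\chi(g)$ and $\chi(z)$ respectively.

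Next I would write down the action of $x$ explicitly in the natural induced basis. Picking a generator $v$ of ${\mathbb C}^{\chi}$, the elements $1\otimes v$ and $x\otimes v$ form a basis of $V$, and since $x^2=1$ we get $x\cdot(1\otimes v)=x\otimes v$ and $x\cdot(x\otimes v)=1\otimes v$. Thus $x$ acts by the involution $\begin{pmatrix} 0 & 1 \\ 1 & 0 \end{pmatrix}$, which is diagonalizable with eigenvalues $\pm 1$ and eigenvectors $e_{\pm}=(1\otimes v)\pm(x\otimes v)$.

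The final step is to check that each eigenline $\mathbb{C}e_{\pm}$ is a ${\mathbb C}G$-submodule. Because $g$ and $z$ already act as scalars on all of $V$, every subspace is automatically invariant under $g$ and $z$; and $\mathbb{C}e_{\pm}$ is invariant under $x$ by construction. Since $x,g,z$ generate $G$, the lines $\mathbb{C}e_+$ and $\mathbb{C}e_-$ are one-dimensional ${\mathbb C}G$-modules, and $V=\mathbb{C}e_+\oplus\mathbb{C}e_-$ is precisely the asserted decomposition ${\mathbb C}^{\chi}_{1}\oplus{\mathbb C}^{\chi}_{-1}$, with $\mathbb{C}e_{\pm}$ generated by the eigenvector of $x$ with eigenvalue $\pm 1$.

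There is essentially no hard step here; the only point that requires care is the reduction in the first paragraph, namely that $\chi(g)=\pm 1$ collapses the two ${\mathbb C}P$-characters occurring in $V$ into a single one, so that the a priori noncommutative module structure degenerates to a scalar action of $P$ together with one diagonalizable involution $x$. Once that observation is in place, diagonalizing a $2\times 2$ involution completes the argument.
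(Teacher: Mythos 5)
Your proof is correct and follows essentially the same route as the paper: both arguments observe that $\chi(g)=\pm 1$ forces $P$ to act by scalars on $V$, write $x$ in the induced basis $1\otimes v,\ x\otimes v$ as the swap involution, and take the eigenvectors $(1\otimes v)\pm(x\otimes v)$. Your first paragraph merely makes explicit (via Corollary \ref{isomor} and $c_x(\chi)=\chi$) what the paper states in one line, which is a welcome clarification but not a different argument.
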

\begin{proof}
In the case $\chi(g) = \pm 1$, group $P$ acts on $V$ by scalar matrices, and hence, any subspace is invariant under action of $P$. Space $V$ has the following basis $1 \otimes v$, $x \otimes v$, where $v$ is a basis of one-dimensional ${\mathbb C}P$ - module ${\mathbb C}^{\chi}$. One can show that ${\mathbb C}^{\chi}_1$ and ${\mathbb C}^{\chi}_{-1}$ are one-dimensional subspaces generated by $1 \otimes v + x \otimes v$ and $1 \otimes v - x \otimes v$ respectively.
Actually, $x(1 \otimes v + x \otimes v) = x(1+x) \otimes v = (x+x^2) \otimes v = (1+x) \otimes v$ and $x(1 \otimes v - x \otimes v) = x(1-x) \otimes v = -(1-x)\otimes v$.
\end{proof}

\section{Space ${\cal L}_{\theta}$, representations of $P$ and deformations of Klein group.}

Let us come back to subspace ${\cal L}_{\theta} \subset M_4({\mathbb C})$ generated by matrices $X,Y,Z$.
\begin{Proposition}
\label{theor}
\begin{itemize}
\item{
For fixed $\theta \in {\mathbb C}^*, \theta \ne \pm 1$ representation $\phi$ of group $G = ({\mathbb Z} \oplus {\mathbb Z}_2) \rtimes {\mathbb Z}_2 = P \rtimes {\mathbb Z}_2$ defined by matrices $X,Y,Z$ is isomorphic to ${\mathbb C}G \otimes_{{\mathbb C}P}{\mathbb C}^{\chi_1} \oplus {\mathbb C}G \otimes_{{\mathbb C}P}{\mathbb C}^{\chi_2}$, where  $\chi_1 = (\chi_1(g) = \theta, \chi(z) = 1)$ and $\chi_2 = (\chi_2(g) = -\theta, \chi_2(z) = -1)$.
}
\item{
In the case $\theta = \pm 1$ matrices $X,Y,Z$ define representation of $G$: ${\mathbb C}^{\chi_1}_{1} \oplus {\mathbb C}^{\chi_1}_{-1} \oplus {\mathbb C}^{\chi_2}_{1} \oplus {\mathbb C}^{\chi_2}_{-1}$, where $\chi_1 = (\chi_1(g) = \theta, \chi(z) = 1)$ and $\chi_2 = (\chi_2(g) = -\theta, \chi_2(z) = -1)$.
}
\end{itemize}
\end{Proposition}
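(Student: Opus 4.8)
The plan is to realise $\mathbb{C}^4$ explicitly as a $\mathbb{C}G$-module through the assignment $x\mapsto X$, $g=xy\mapsto G:=XY$, $z\mapsto Z$, and then read off its decomposition from the restriction to the abelian subalgebra $\mathbb{C}P$. First I would confirm that $\phi$ is a genuine representation of $G$: the relations $X^2=Z^2=1$ and $XZ=ZX$ are among those already recorded for $X,Y,Z$, while $ZG=GZ$ and $XGX=G^{-1}$ follow formally from (\ref{relations}), since $g=xy$ gives $xgx=x(xy)x=yx=(xy)^{-1}=g^{-1}$ and $z(xy)=(xz)y=x(yz)=(xy)z$. Thus $\phi$ factors through $\mathbb{C}G$, and it suffices to analyse the $\mathbb{C}P$-module structure of $\mathbb{C}^4$, i.e. to diagonalise the commuting pair $(G,Z)$.

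Next I would compute $G=XY$ and observe that it is the block-antidiagonal matrix interchanging $e_1\leftrightarrow e_4$ with factor $\theta^{-1}$ and $e_2\leftrightarrow e_3$ with factor $\theta$, while $Z$ interchanges $e_1\leftrightarrow e_4$ and $e_2\leftrightarrow e_3$ with factor $1$. A short calculation then produces four simultaneous eigenvectors $e_2+e_3,\ e_1+e_4,\ e_2-e_3,\ e_1-e_4$, on which the pair $(g,z)$ acts through the characters $(\theta,1),(\theta^{-1},1),(-\theta,-1),(-\theta^{-1},-1)$ respectively. Using $c_x(\chi)(g)=\chi(g)^{-1}$ and $c_x(\chi)(z)=\chi(z)$ from (\ref{invcx}), these four characters are exactly $\chi_1,\ c_x(\chi_1),\ \chi_2,\ c_x(\chi_2)$ with $\chi_1=(\theta,1)$ and $\chi_2=(-\theta,-1)$; hence, as a $\mathbb{C}P$-module, $\mathbb{C}^4\cong\mathbb{C}^{\chi_1}\oplus\mathbb{C}^{c_x(\chi_1)}\oplus\mathbb{C}^{\chi_2}\oplus\mathbb{C}^{c_x(\chi_2)}$.

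For $\theta\ne\pm1$ we have $\chi_i(g)=\pm\theta\ne\pm1$, so $c_x(\chi_i)\ne\chi_i$. I would then take the $\chi_1$-eigenvector $v_1=e_2+e_3$ and check that $\{v_1,Xv_1\}=\{e_2+e_3,\,e_1+e_4\}$ spans a $2$-dimensional $G$-invariant subspace; by the corollary describing $2$-dimensional irreducible $\mathbb{C}G$-modules (the one preceding Corollary \ref{cpmf}), this submodule is isomorphic to $\mathbb{C}G\otimes_{\mathbb{C}P}\mathbb{C}^{\chi_1}$, and likewise $\{e_2-e_3,\,e_1-e_4\}$ gives a copy of $\mathbb{C}G\otimes_{\mathbb{C}P}\mathbb{C}^{\chi_2}$. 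Since the two submodules carry disjoint $\mathbb{C}P$-characters (they sit in the $z=+1$ and $z=-1$ eigenspaces) and their dimensions sum to $4$, the sum is direct and exhausts $\mathbb{C}^4$, yielding the first claim; this also matches Corollary \ref{isomor}, which identifies the $\mathbb{C}P$-restriction of $\mathbb{C}G\otimes_{\mathbb{C}P}\mathbb{C}^{\chi_i}$ with $\mathbb{C}^{\chi_i}\oplus\mathbb{C}^{c_x(\chi_i)}$.

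For $\theta=\pm1$ the characters degenerate: $\chi_i(g)=\pm1$ and $c_x(\chi_i)=\chi_i$, so $G$ and $Z$ act by scalars on each generated subspace while $X$ merely interchanges its two basis vectors. Applying Corollary \ref{cpmf} to each factor, $\mathbb{C}G\otimes_{\mathbb{C}P}\mathbb{C}^{\chi_i}$ splits into the $\pm1$-eigenspaces of $X$, namely $\mathbb{C}^{\chi_i}_1\oplus\mathbb{C}^{\chi_i}_{-1}$, which gives the four one-dimensional summands of the second claim. The only substantive work is the bookkeeping of the joint eigenbasis and the verification that the four characters pair correctly into $c_x$-orbits; once that identification is in place, the structural corollaries of Section 5 deliver both statements, so I expect the eigenvector computation and the character matching to be the main (though essentially routine) obstacle.
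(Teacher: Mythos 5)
Your proposal is correct and follows essentially the same route as the paper: restrict to the commutative subalgebra ${\mathbb C}P$, simultaneously diagonalize $XY$ and $Z$ to read off the four characters $\chi_1, c_x(\chi_1), \chi_2, c_x(\chi_2)$, and then assemble the induced modules ${\mathbb C}G\otimes_{{\mathbb C}P}{\mathbb C}^{\chi_i}$ using the structure results of Section 5 (with Corollary \ref{cpmf} handling the degenerate case $\theta=\pm 1$). The only differences are cosmetic: you exhibit the joint eigenbasis $e_2\pm e_3$, $e_1\pm e_4$ explicitly where the paper merely asserts the diagonalizing basis, and you deduce directness of the sum from the disjoint $Z$-eigenvalues rather than from the paper's short-exact-sequence argument.
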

\begin{proof}
Using described early technics, representation of ${\mathbb C}G$ is given by representation of ${\mathbb C}P$. Consider matrices $XY$ and $Z$.
\begin{equation}
XY =
\begin{pmatrix}
0 & 0 & 0 & 1/\theta\\
0 & 0 & \theta & 0\\
0 & \theta & 0 & 0\\
1/\theta & 0 & 0 & 0
\end{pmatrix},
Z =
\begin{pmatrix}
0 & 0 & 0 & 1\\
0 & 0 & 1 & 0\\
0 & 1 & 0 & 0\\
1 & 0 & 0 & 0
\end{pmatrix}
\end{equation}
One can show that there is a basis in which matrix $XY$ and $Z$ have the following view:
\begin{equation}
XY =
\begin{pmatrix}
1/\theta & 0 & 0 & 0\\
0 & \theta & 0 & 0\\
0 & 0 & -1/\theta & 0\\
0 & 0 & 0 & -\theta
\end{pmatrix},
Z =
\begin{pmatrix}
1 & 0 & 0 & 0\\
0 & 1 & 0 & 0\\
0 & 0 & -1 & 0\\
0 & 0 & 0 & -1
\end{pmatrix}
\end{equation}

Let $V$ be a 4-dimensional ${\mathbb C}G$ - module given by matrices $X,Y,Z$.
If $\theta \ne \pm 1$, then we have two irreducible submodules ${\mathbb C}G \otimes_{{\mathbb C}P} {\mathbb C}^{\chi_1}$ and ${\mathbb C}G \otimes_{{\mathbb C}P} {\mathbb C}^{\chi_2}$ of $M$, where $\chi_1 = (\chi_1(g) = \theta, \chi_1(z) = 1)$ and $\chi_2 = (\chi_2(g) = -\theta, \chi_2(z) = -1)$.

Therefore, we have the following commutative diagram:
\begin{equation}
\xymatrix{
0 \ar[r] & {\mathbb C}G \otimes_{{\mathbb C}P} {\mathbb C}^{\chi_1} \ar[r] & V \ar[r] & W \ar[r] & 0\\
& & {\mathbb C}G \otimes_{{\mathbb C}P} {\mathbb C}^{\chi_2}\ar[u]
}
\end{equation}
Since ${\mathbb C}G \otimes_{{\mathbb C}P} {\mathbb C}^{\chi_1}$ and ${\mathbb C}G \otimes_{{\mathbb C}P} {\mathbb C}^{\chi_2}$ are not isomorphic and irreducible, then composition ${\mathbb C}G \otimes_{{\mathbb C}P} {\mathbb C}^{\chi_2} \to V \to W$ is an immersion. Also, note that ${\rm dim}_{\mathbb C}W = {\rm dim}_{\mathbb C}{\mathbb C}G \otimes_{{\mathbb C}P} {\mathbb C}^{\chi_2} = 2$. Thus, ${\mathbb C}G \otimes_{{\mathbb C}P} {\mathbb C}^{\chi_2} \cong W$. Thus, ${\mathbb C}G$-module $V$ is a direct sum ${\mathbb C}G \otimes_{{\mathbb C}P} {\mathbb C}^{\chi_1} \oplus {\mathbb C}G \otimes_{{\mathbb C}P} {\mathbb C}^{\chi_2}$.

As we know from section \ref{kleingr} if $\theta = \pm 1$, then representation $\phi$ defines regular representation of Klein group.
\end{proof}

For further studying of representations of $G$ from proposition, consider ${\mathbb C}G$ - module $V$ corresponding to representation of $G$ given by matrices $X,Y,Z$. Using theorem, we get that ${\rm Char}(V) = \{ (\chi(g) = \theta, \chi(z) = 1), (\chi(g) = 1/\theta, \chi(z) = 1), (\chi(g) = -\theta, \chi(z) = -1), (\chi(g) = -1/\theta, \chi(z) = -1)\}$. One can define the automorphism $s$ of group algebra (not group!) ${\mathbb C}P$ defined by formula:
\begin{equation}
g \mapsto -g, z \mapsto -z.
\end{equation}
Also, we have the involution $c_x: (g,z) \mapsto (g^{-1},z)$ defined early.
One can check that automorphisms $s$ and $c_x$ commute. Thus, we have an action of ${\mathbb Z}_2 \times {\mathbb Z}_2$ on group algebra ${\mathbb C}P$.
Consider algebra of invariants ${\mathbb C}P^{{\mathbb Z}_2 \times {\mathbb Z}_2} = \{ a \in {\mathbb C}P| c_x(a) = a, s(a) = a\}$.
\begin{Proposition}
Algebra of invariants ${\mathbb C}P^{{\mathbb Z}_2 \times {\mathbb Z}_2}$ is generated by elements $v = (g+g^{-1})z$. Also, ${\mathbb C}P$ is a free ${\mathbb C}P^{{\mathbb Z}_2 \times {\mathbb Z}_2}$ - module of rank 4.
\end{Proposition}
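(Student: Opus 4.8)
The plan is to identify the invariant algebra explicitly and then to prove freeness through a two-step tower built on Proposition~\ref{freemod}. Since the preceding discussion already identifies $\mathcal{C}$ as the algebra of $c_x$-invariants in $\mathbb{C}P$, and since $s$ commutes with $c_x$ and hence preserves $\mathcal{C}$, the full invariant algebra $\mathbb{C}P^{\mathbb{Z}_2\times\mathbb{Z}_2}$ equals the $s$-invariants inside $\mathcal{C}$. I would therefore write a general element of $\mathcal{C}\cong\mathbb{C}[u]\otimes\mathbb{C}[z,z^2=1]$ as $p(u)+q(u)z$ with $p,q\in\mathbb{C}[u]$, $u=g+g^{-1}$, and apply $s$. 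Because $s(u)=-u$ and $s(z)=-z$, one gets $s(p(u)+q(u)z)=p(-u)-q(-u)z$, so $s$-invariance forces $p$ to be even and $q$ to be odd in $u$. Writing the even $p$ as a polynomial in $u^2$ and the odd $q$ as $u$ times a polynomial in $u^2$, the element takes the form (polynomial in $u^2$) plus (polynomial in $u^2$) times $uz$, which lies in $\mathbb{C}[v]$ since $v=uz$ and $v^2=u^2$. The reverse inclusion is the one-line check $c_x(v)=v$ and $s(v)=(-u)(-z)=v$, so the invariant algebra is exactly $\mathbb{C}[v]$; I would also note that $v$ is transcendental over $\mathbb{C}$ (its even and odd powers land in disjoint monomial families), so $\mathbb{C}[v]$ is a genuine polynomial ring.

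For the rank-$4$ freeness I would factor through $\mathcal{C}$, writing $R=\mathbb{C}[v]$. The key intermediate step is to show that $\mathcal{C}$ is a free $R$-module of rank $2$ with basis $\{1,z\}$. To verify this I would expand the two summands against the standard $\mathbb{C}$-basis $\{u^i,\,u^i z: i\ge 0\}$ of $\mathcal{C}$: using $v^{2m}=u^{2m}$ and $v^{2m+1}=u^{2m+1}z$, the submodule $R\cdot 1$ is spanned by $\{u^{2m},\,u^{2m+1}z\}$, while $R\cdot z$ (using $z^2=1$) is spanned by $\{u^{2m}z,\,u^{2m+1}\}$. These four families together exhaust the basis of $\mathcal{C}$ and are supported on pairwise disjoint monomials, which simultaneously gives spanning and directness, so $\mathcal{C}=R\cdot 1\oplus R\cdot z$. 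Combining this with Proposition~\ref{freemod}, which gives $\mathbb{C}P$ free of rank $2$ over $\mathcal{C}$ (with basis $\{1,g\}$), transitivity of freeness yields $\mathbb{C}P\cong\mathcal{C}^{2}\cong(R^{2})^{2}=R^{4}$; concretely $\{1,g,z,gz\}$ is an $R$-basis of $\mathbb{C}P$.

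I expect the main obstacle to be the monomial bookkeeping underlying both halves of the argument: in the generation step one must keep the two separate symmetry conditions (even versus odd in $u$, and presence versus absence of $z$) from getting tangled, and in the freeness step one must check that the four monomial families arising in $\mathcal{C}=R\oplus Rz$ partition the standard basis with no overlap, since it is precisely this disjointness that makes the decomposition both spanning and direct. Once these parity accounts are set up cleanly, the tower argument and the appeal to Proposition~\ref{freemod} are routine.
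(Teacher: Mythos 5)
Your proposal is correct, and its first half follows the paper while its second half takes a cleaner route than the paper sketches. For the generation statement both arguments begin identically, by observing that $s$ preserves $\mathcal{C}$ and that ${\mathbb C}P^{{\mathbb Z}_2 \times {\mathbb Z}_2} = \mathcal{C}^{s}$; but where you do the even/odd parity decomposition of $p(u)+q(u)z$ directly inside $\mathcal{C}$, the paper instead lifts to the polynomial ring, quotes the classical computation ${\mathbb C}[u,z]^{s} = {\mathbb C}[v_1 = u^2,\, v = uz,\, v_3 = z^2]$ with the single relation $v_1 v_3 = v^2$, and then imposes $z^2 = 1$ to collapse this cone to ${\mathbb C}[v]$ — the same computation, with your version being the more self-contained of the two. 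For the freeness statement the paper gives only the one-line indication ``analogous to Proposition~\ref{freemod}'', i.e.\ it envisages repeating the division argument of that proposition to exhibit a rank-$4$ basis directly over ${\mathbb C}[v]$; you instead factor the extension ${\mathbb C}[v] \subset {\mathbb C}P$ through the intermediate algebra $\mathcal{C}$, prove $\mathcal{C} = {\mathbb C}[v]\cdot 1 \oplus {\mathbb C}[v]\cdot z$ by the disjoint-monomial bookkeeping, and conclude by transitivity of freeness combined with Proposition~\ref{freemod}. This tower argument is a genuinely different organization: it reuses Proposition~\ref{freemod} as a black box rather than redoing its division trick, it produces the explicit basis $\{1,g,z,gz\}$, and it replaces the paper's omitted verification by a routine check, so it actually supplies the details the paper leaves to the reader.
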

\begin{proof}
It can be shown that ${\mathbb C}P^{{\mathbb Z}_2 \times {\mathbb Z}_2} = {\cal C}^{s} = \{p \in {\cal C}| s(p) = p\}$. Recall that ${\cal C}$ is a quotient of ${\mathbb C}[u,z]$ by relation $z^2 = 1$.
It is easy that relation $z^2 = 1$ is $s$-invariant. Thus, ${\cal C}^{s}$ is a quotient of ${\mathbb C}[u,z]^{s} = \{p(u,z) = p(-u,-z)\}$ by relation $z^2 = 1$. It can be calculated that ${\mathbb C}[u,z]^{s} = {\mathbb C}[v_1 = u^2,v = uz,v_3 = z^2, v_1v_3 = v^2_2]$. Using relation $z^2 = 1$, we get that ${\cal C}^{s} = {\mathbb C}[v_1, v, v_1 = v^2]$.
Analogous to proposition \ref{freemod}, one can prove that ${\mathbb C}P$ is a free ${\mathbb C}P^{{\mathbb Z}_2 \times {\mathbb Z}_2}$ - module of rank 4.
\end{proof}
\begin{Corollary}
\label{freeeight}
${\mathbb C}G$ is a free ${\mathbb C}[v]$ - module of rank 8.
\end{Corollary}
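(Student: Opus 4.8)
The plan is to obtain the result purely by transitivity of freeness, chaining the two freeness statements that immediately precede the corollary. First I would recall that the preceding Proposition identifies the invariant subalgebra ${\mathbb C}P^{{\mathbb Z}_2 \times {\mathbb Z}_2}$ with the polynomial ring ${\mathbb C}[v]$ generated by $v = (g+g^{-1})z$, and asserts that ${\mathbb C}P$ is free of rank $4$ over it. Independently, the first Corollary of Section \ref{cent} gives that ${\mathbb C}G$ is a free left ${\mathbb C}P$-module of rank $2$, with the decomposition $G = P \cup Px$ furnishing the explicit basis $\{1, x\}$. Composing a rank-$2$ extension with a rank-$4$ extension should produce a rank-$8$ extension, which is exactly the claim.

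The one structural point that makes this chaining legitimate is that ${\mathbb C}[v]$ is \emph{central} in ${\mathbb C}G$, not merely a subring of the commutative algebra ${\mathbb C}P$. Indeed $v$ lies in the center ${\cal C}$: it is fixed by the conjugation involution $c_x$, since $c_x(v) = (g^{-1}+g)z = v$, while it commutes with $g$ and $z$ because these generate the abelian group $P$. Hence left multiplication by elements of ${\mathbb C}[v]$ coincides with right multiplication, and the left ${\mathbb C}P$-module structure on ${\mathbb C}G$ restricts to a genuine (two-sided) ${\mathbb C}[v]$-module structure.

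With centrality in hand I would spell out the product basis explicitly. Write an arbitrary element of ${\mathbb C}G$ uniquely as $p_0 + p_1 x$ with $p_0, p_1 \in {\mathbb C}P$, then expand each $p_i = \sum_{k=1}^{4} c_{ik} e_k$ in the ${\mathbb C}[v]$-basis $\{e_1,\dots,e_4\}$ of ${\mathbb C}P$ guaranteed by the Proposition, with $c_{ik} \in {\mathbb C}[v]$. Because the coefficients are central they may be moved past $e_k$ and $x$, so the element becomes a ${\mathbb C}[v]$-combination of the eight vectors $\{e_k,\; e_k x : 1 \le k \le 4\}$. Uniqueness of both expansions forces uniqueness of this combination, so these eight vectors form a ${\mathbb C}[v]$-basis and ${\mathbb C}G$ is free of rank $2 \cdot 4 = 8$ over ${\mathbb C}[v]$.

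I expect no genuine obstacle: the argument is the standard transitivity of free modules, and every ingredient has already been established. The only thing that could conceivably interfere is the noncommutativity of ${\mathbb C}G$, and the single verification that neutralizes it is the centrality of $v$ noted above. As a sanity check on the rank I would cross it against the alternative chain: ${\mathbb C}G$ is free of rank $4$ over the center ${\cal C}$ by Corollary \ref{dimalg}, and ${\cal C}$ is free of rank $2$ over ${\mathbb C}[v] = {\cal C}^{s}$ (one sees this directly from $u = vz$ and $z^2 = 1$, which makes $\{1, z\}$ a ${\mathbb C}[v]$-basis of ${\cal C}$); this again yields $4 \cdot 2 = 8$.
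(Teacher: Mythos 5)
Your proof is correct and is exactly the argument the paper intends (the corollary is left unproved there): compose the rank-$2$ freeness of ${\mathbb C}G$ over ${\mathbb C}P$ with the rank-$4$ freeness of ${\mathbb C}P$ over ${\mathbb C}P^{{\mathbb Z}_2\times{\mathbb Z}_2}={\mathbb C}[v]$ from the preceding proposition. Your explicit check that $v$ is central, which legitimizes the transitivity over the noncommutative ${\mathbb C}G$, is a worthwhile detail the paper omits.
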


Fix character $\chi$ of ${\cal C}^s$. Denote by $b$ the value $\chi(v)$. Denote by ${\mathbb C}^{\chi}$ the ${\mathbb C}[v]$ - module corresponding to $\chi$. Also, we can consider the following algebra
${\cal A}_{\theta} = {\mathbb C}G \otimes_{{\cal C}^s} {\mathbb C}^{\chi}$. For fixed $b \in {\mathbb C}$ algebra ${\cal A}_{\theta}$ is a quotient of ${\mathbb C}G$ by relation:
\begin{equation}
\label{relab}
(g+g^{-1})z = b\cdot 1,
\end{equation}
here $b = \chi(z(g+g^{-1})) = \theta+\theta^{-1}$.
Consider representation $\phi$ of $G$. We have the following commutative diagram:
\begin{equation}
\xymatrix{{\mathbb C}G \ar[rr]^{\phi}\ar[rd] && Mat_4({\mathbb C})\\
& {\cal A}_{\theta} \ar[ru]
}
\end{equation}
It means that for any $\theta \in {\mathbb C}^*$ matrix $\phi(g+g^{-1})z = (XY + YX)Z$ is $(\theta + \theta^{-1})I$, where $I$ is an identity matrix.

Let us prove the following
\begin{Theorem}\label{theorem}
For fixed $\theta \in {\mathbb C}^*$ ${\rm dim}_{\mathbb C}{\cal A}_{\theta} = 8$.
\begin{itemize}
\item{
If $b=\theta + \theta^{-1} \ne 0$ then algebra ${\cal A}_{\theta}$ has the following basis $1,g,g^2,g^3,x,xg,xg^2,xg^3$ and the relations:
\begin{equation}
x^2 = 1, g^4 + (2-b^2)g^2 + 1 = 0, xgx = g^{-1} = (b^2 - 2)g - g^3, b = \theta+\theta^{-1}.
\end{equation}
Also, we have canonical isomorphism: ${\cal A}_{\theta} \cong {\cal A}_{-\theta}$ for any $\theta \in {\mathbb C}, \theta \ne \pm {\rm i}$.
}
\item{
If $b=\theta+\theta^{-1} = 0$, then ${\cal A}_{\pm {\rm i}}$ has the following basis $1,g,x,z,xg,xz,gz,xgz$ and the relations:
\begin{equation}
x^2 = z^2 = 1, g^2 = -1, xgx = g^{-1}, xz = zx, gz = zg
\end{equation}
}
\end{itemize}
Also, we have the following description of algebra ${\cal A}_{\theta}$ for various ${\theta} \in {\mathbb C}^*$:
\begin{itemize}
\item{
If $\theta \ne \pm 1$, then ${\cal A}_{\theta} \cong {\rm Mat}_2({\mathbb C}) \oplus {\rm Mat}_2({\mathbb C})$.
}
\item{
If $\theta = \pm 1$, then algebra ${\cal A}_{\pm 1}$ has 4-dimensional Jacobson radical $J$ and ${\cal A}_{\pm 1}/J = \bigoplus^{4}_{i=1}{\mathbb C}$.
}
\end{itemize}
\end{Theorem}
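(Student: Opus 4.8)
The plan is to dispose of the dimension and the explicit presentation first, and then derive the structure theorem from a decomposition of ${\cal A}_{\theta}$ into two blocks indexed by the characters of ${\cal C}$ lying over $\chi$.

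First I would read off $\dim_{\mathbb C}{\cal A}_{\theta}=8$ directly from Corollary \ref{freeeight}: since ${\mathbb C}G$ is free of rank $8$ over ${\cal C}^{s}={\mathbb C}[v]$ and ${\cal A}_{\theta}={\mathbb C}G\otimes_{{\cal C}^{s}}{\mathbb C}^{\chi}$ is the specialization at $v=b$, the dimension is $8$ for every $\theta$. To get the bases and relations I would exploit the invertibility of $z$ (from $z^{2}=1$). Multiplying the defining relation $(g+g^{-1})z=b\cdot 1$ by $z$ yields $g+g^{-1}=bz$. If $b\ne 0$ this lets me eliminate $z=b^{-1}(g+g^{-1})$; then $z^{2}=1$ gives $(g+g^{-1})^{2}=b^{2}$, i.e. $g^{4}+(2-b^{2})g^{2}+1=0$, and multiplying by $g^{-1}$ gives $g^{-1}=(b^{2}-2)g-g^{3}$. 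Together with $x^{2}=1$ and $xgx=g^{-1}$ these are the stated relations, and $\{1,g,g^{2},g^{3},x,xg,xg^{2},xg^{3}\}$ spans ${\cal A}_{\theta}$; the count matches the dimension $8$, so it is a basis. If $b=0$ the same computation forces $g+g^{-1}=0$, i.e. $g^{2}=-1$, while $z$ can no longer be eliminated; here ${\mathbb C}[g^{\pm 1}]/(g^{2}+1)$ and ${\mathbb C}[z]/(z^{2}-1)$ are each two-dimensional and, together with the coset of $x$, give the basis $\{1,g,x,z,xg,xz,gz,xgz\}$ and the listed relations. The isomorphism ${\cal A}_{\theta}\cong{\cal A}_{-\theta}$ for $\theta\ne\pm{\rm i}$ is then immediate: replacing $\theta$ by $-\theta$ replaces $b$ by $-b$, but $b$ enters the relations only through $b^{2}$, so the identity on the generators $x,g$ is an isomorphism of the presented algebras.

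For the structure theorem I would decompose ${\cal A}_{\theta}$ along the fibre of ${\rm Spec}\,{\cal C}\to{\rm Spec}\,{\cal C}^{s}$ over $\chi$. Exactly as in Proposition \ref{freemod}, ${\cal C}={\mathbb C}[u,z]/(z^{2}-1)$ is free of rank $2$ over ${\cal C}^{s}={\mathbb C}[v]$ with basis $\{1,z\}$, where $v=uz$ and $u=g+g^{-1}$. Specializing $v=b$ gives ${\cal C}\otimes_{{\cal C}^{s}}{\mathbb C}^{\chi}={\mathbb C}[u,z]/(z^{2}-1,\,uz-b)$; since $z^{2}=1$ forces $u=uz^{2}=(uz)z=bz$, this quotient is ${\mathbb C}[z]/(z^{2}-1)\cong{\mathbb C}^{\chi_{+}}\oplus{\mathbb C}^{\chi_{-}}$ for every $b$, where $\chi_{+}=(u=b,\,z=1)$ and $\chi_{-}=(u=-b,\,z=-1)$ are the two points of the fibre, always distinct because their $z$-values differ. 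By associativity of the tensor product,
\begin{equation}
{\cal A}_{\theta}={\mathbb C}G\otimes_{{\cal C}^{s}}{\mathbb C}^{\chi}\cong\bigl({\mathbb C}G\otimes_{\cal C}{\mathbb C}^{\chi_{+}}\bigr)\oplus\bigl({\mathbb C}G\otimes_{\cal C}{\mathbb C}^{\chi_{-}}\bigr),
\end{equation}
and each summand is one of the four-dimensional algebras analysed in Proposition \ref{algirr}, with parameters $\chi_{+}(u)=b$ and $\chi_{-}(u)=-b$.

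Finally I would invoke Proposition \ref{algirr}, using that $b=\theta+\theta^{-1}=\pm 2$ if and only if $\theta=\pm 1$. If $\theta\ne\pm 1$ then neither $b$ nor $-b$ equals $\pm 2$, so both summands are ${\rm Mat}_2({\mathbb C})$ and ${\cal A}_{\theta}\cong{\rm Mat}_2({\mathbb C})\oplus{\rm Mat}_2({\mathbb C})$. If $\theta=\pm 1$ then $b=\pm 2$, so $\chi_{+}(u)$ and $\chi_{-}(u)$ both lie in $\{\pm 2\}$ and each summand has a two-dimensional radical with semisimple quotient ${\mathbb C}\oplus{\mathbb C}$; since the radical of a direct sum is the direct sum of the radicals, $J$ is four-dimensional and ${\cal A}_{\pm 1}/J\cong\bigoplus_{i=1}^{4}{\mathbb C}$. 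The step I expect to require the most care is the fibre computation identifying ${\cal C}\otimes_{{\cal C}^{s}}{\mathbb C}^{\chi}$ with ${\mathbb C}^{\chi_{+}}\oplus{\mathbb C}^{\chi_{-}}$ together with the base-change isomorphism in the display, since this is precisely what reduces the whole statement to the already-established Proposition \ref{algirr}; the remaining steps are either a dimension count or direct manipulation of the defining relation.
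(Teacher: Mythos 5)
Your proposal is correct and follows essentially the same route as the paper: the dimension comes from Corollary \ref{freeeight}, the bases and relations from eliminating $z$ via $(g+g^{-1})z=b\cdot 1$, and the structure theorem from splitting ${\cal A}_{\theta}$ along the two central characters $z\mapsto\pm 1$ (equivalently $g+g^{-1}\mapsto\pm b$) and applying Proposition \ref{algirr} to each $4$-dimensional block. Your base-change identification ${\cal A}_{\theta}\cong({\mathbb C}G\otimes_{\cal C}{\mathbb C}^{\chi_{+}})\oplus({\mathbb C}G\otimes_{\cal C}{\mathbb C}^{\chi_{-}})$ is in fact a cleaner way of making precise what the paper only sketches as ``analogous to Proposition \ref{algirr}'', and it also recovers the $4$-dimensional radical without the paper's explicit computation with the ideal generated by $g^{2}-1$.
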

\begin{proof}
Using corollary \ref{freeeight}, we get that for fixed $\theta \in {\mathbb C}$ algebra ${\cal A}_{\theta}$ has dimension 8. Using relation $(g+g^{-1})z = b \cdot 1$, we get that $g+g^{-1} = b \cdot z$. If $b =\ne 0$, then $z = \frac{1}{b}(g + g^{-1})$. Also, $b^2 \cdot z^2 = b^2 \cdot 1 = (g + g^{-1})^2$. Thus, we can express $z$ in terms of $g$ and since ${\rm dim}_{\mathbb C}{\cal A}_{\theta} = 8$, we obtain that ${\cal A}_{\theta }$ has the required basis and the relations.
If $b= 0$, then $(g+g^{-1})z = 0$. Since $z^2 = 1$, we get that $g + g^{-1} = 0$. Also, we get that algebra ${\cal A}_{\pm {\rm i}}$ has the required basis and the relations.
Existence of canonical isomorphism is trivial.

For studying algebra ${\cal A}_{\theta}$, we have to study center of algebra ${\cal A}_{\theta}$. If $b \ne 0$ then this center has basis $1, g+g^{-1}$ with relation $(g+g^{-1})^2 = b^2 \cdot 1$. Further, center has two characters $g+g^{-1} \mapsto b = \theta+\theta^{-1}$ and $g+g^{-1} \mapsto -b = -(\theta+\theta^{-1})$.
If $\theta = \pm {\rm i}$, then center has a basis $1,z$ and relation $z^2 = 1$. Thus, there are two characters of center. Analogous to proposition \ref{algirr}, one can show that if $\theta \ne \pm 1$ algebra ${\cal A}_{\theta}$ is a direct sum of matrix algebras.

Using canonical isomorphism ${\cal A}_{\theta}$ and ${\cal A}_{-\theta}$, we get that it is sufficient to consider the case of algebra ${\cal A}_1$.
Using proposition \ref{algirr}, we get that if $\theta = 1$, algebras ${\cal A}_{1}$ has 4-dimensional Jacobson radical. Actually, in this case $(g^2 - 1)^2 = 0$. Consider ideal $J$ of ${\cal A}_{1}$ generated by element $g^2 - 1$. Thus, ideal $J$ has a basis $(g^2-1),x(g^2-1),g(g^2-1),xg(g^2-1)$. One can check that $J^2 = 0$. Therefore, algebras ${\cal A}_{\pm 1}$ has 4-dimensional Jacobson radical.
\end{proof}
Further, let us consider image of algebras ${\cal A}_{\theta}$ under representation $\phi$ defined by matrices $X,Y,Z$. Remind the following notion: representation of algebra is called {\it semisimple} iff this representation is a direct sum of irreducible ones.
Recall the following property of Jacobson radical: image of radical under any semisimple representation is zero.
Subalgebra ${\cal L}_{\theta} \subset Mat_4({\mathbb C})$ is an image of algebra ${\cal A}_{\theta}$ for arbitrary $\theta$. Using canonical isomorphism ${\cal A}_{\theta} \cong {\cal A}_{-\theta}$, we get that $b$ is defined up to sign. One can find that we can choose $b = \chi(z(g+g^{-1})) = \pm (\theta + \theta^{-1})$.
Therefore, we have the following
\begin{Corollary}\label{main}
Consider subalgebra ${\cal M}_{\theta} \subset Mat_4({\mathbb C})$ generated by ${\cal L}_{\theta} \subset Mat_4({\mathbb C})$. We have the following possibilities:
\begin{itemize}
\item{
If $\theta \ne \pm 1$, then this ${\cal M}_{\theta}$ is a direct sum $Mat_2({\mathbb C}) \oplus Mat_2({\mathbb C})$.
}
\item{
Assume that $\theta = \pm 1$. In this case subalgebra ${\cal M}_{\theta}$ is an image of Klein group or direct sum of four ${\mathbb C}$'s.
}
\end{itemize}
\end{Corollary}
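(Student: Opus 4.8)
The plan is to deduce the corollary directly from Theorem~\ref{theorem}, together with the factorization of $\phi$ through ${\cal A}_{\theta}$ and the semisimplicity of $\phi$. Recall from the commutative diagram preceding the corollary that $\phi$ factors as ${\mathbb C}G \to {\cal A}_{\theta} \to Mat_4({\mathbb C})$ and that ${\cal M}_{\theta} = \phi({\cal A}_{\theta})$; hence it suffices to analyse the restriction of $\phi$ to the finite-dimensional algebra ${\cal A}_{\theta}$, whose structure is furnished by Theorem~\ref{theorem}. Recall also, from Proposition~\ref{theor}, that the $4$-dimensional representation $\phi$ is semisimple: for $\theta \ne \pm 1$ it is the direct sum of the two non-isomorphic irreducibles ${\mathbb C}G \otimes_{{\mathbb C}P}{\mathbb C}^{\chi_1}$ and ${\mathbb C}G \otimes_{{\mathbb C}P}{\mathbb C}^{\chi_2}$, while for $\theta = \pm 1$ it is the regular representation of the Klein group.

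First I would treat the case $\theta \ne \pm 1$. By Theorem~\ref{theorem} the algebra ${\cal A}_{\theta} \cong Mat_2({\mathbb C}) \oplus Mat_2({\mathbb C})$ is semisimple, with exactly two isomorphism classes of simple modules, one attached to each matrix block. By Proposition~\ref{theor} both of these simple modules occur in $\phi$, namely as the summands ${\mathbb C}G \otimes_{{\mathbb C}P}{\mathbb C}^{\chi_1}$ and ${\mathbb C}G \otimes_{{\mathbb C}P}{\mathbb C}^{\chi_2}$. A representation of $Mat_2({\mathbb C}) \oplus Mat_2({\mathbb C})$ is faithful precisely when both simple modules appear, so $\phi \colon {\cal A}_{\theta} \to Mat_4({\mathbb C})$ is injective, hence an isomorphism onto its image. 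Therefore ${\cal M}_{\theta} = \phi({\cal A}_{\theta}) \cong {\cal A}_{\theta} \cong Mat_2({\mathbb C}) \oplus Mat_2({\mathbb C})$, which incidentally recovers $\dim_{\mathbb C}{\cal M}_{\theta} = 8$.

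Next I would handle $\theta = \pm 1$. By Theorem~\ref{theorem} the algebra ${\cal A}_{\pm 1}$ has a $4$-dimensional Jacobson radical $J$ with ${\cal A}_{\pm 1}/J \cong \bigoplus_{i=1}^{4}{\mathbb C}$. Since $\phi$ is semisimple and the image of the Jacobson radical under any semisimple representation is zero, we have $\phi(J) = 0$, so $\phi$ factors through the semisimple quotient ${\cal A}_{\pm 1}/J \cong \bigoplus_{i=1}^{4}{\mathbb C}$. Consequently ${\cal M}_{\pm 1} = \phi({\cal A}_{\pm 1})$ is a quotient of $\bigoplus_{i=1}^{4}{\mathbb C}$. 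Comparing with the explicit $4$-dimensional image ${\cal M}_{\pm 1} = {\rm span}\{I,X,Y,Z\}$, which is exactly the image of the group algebra ${\mathbb C}K_4$ of the Klein group and is isomorphic to $\bigoplus_{i=1}^{4}{\mathbb C}$ via its four characters, shows that this quotient is the entire algebra. Hence ${\cal M}_{\pm 1} \cong \bigoplus_{i=1}^{4}{\mathbb C}$, that is, the image of the Klein group.

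The only point requiring genuine care --- and the main obstacle --- is the injectivity of $\phi$ on ${\cal A}_{\theta}$ for $\theta \ne \pm 1$. I would establish it representation-theoretically as above: the kernel of $\phi$ is a two-sided ideal of $Mat_2({\mathbb C}) \oplus Mat_2({\mathbb C})$, and since both simple blocks act non-trivially the only such ideal is zero. Alternatively, one may invoke the equality $\dim_{\mathbb C}{\cal M}_{\theta} = 8 = \dim_{\mathbb C}{\cal A}_{\theta}$ noted in the introduction, which combined with the surjectivity of $\phi \colon {\cal A}_{\theta} \to {\cal M}_{\theta}$ forces $\phi$ to be an isomorphism. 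Everything else reduces to a direct application of Theorem~\ref{theorem} and the vanishing of the radical under the semisimple representation $\phi$.
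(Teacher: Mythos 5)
Your proposal is correct and follows essentially the same route as the paper: factor $\phi$ through ${\cal A}_{\theta}$, apply Theorem~\ref{theorem} for the structure of ${\cal A}_{\theta}$, and for $\theta=\pm 1$ use semisimplicity of $\phi$ (Proposition~\ref{theor}) to kill the radical. Your explicit justification of injectivity for $\theta\ne\pm 1$ (both simple blocks occur among the irreducible summands, or equivalently the dimension count $8=8$) is a welcome elaboration of a step the paper leaves implicit, but it is not a different argument.
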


\section*{Appendix A. Homological properties of irreducible ${\mathbb C}G$-modules.}

In this section we will introduce some notions of homological algebra and prove some technical facts on some ${\mathbb C}G$-modules.
One can find classical definitions and results in homological algebra in many books, for example \cite{Maclane}, \cite{GelfMan}, \cite{Eil}

Recall the notion of extension group. Fix some associative algebra $A$. Suppose that we have two $A$-modules $W_1$ and $W_2$. We would like to classify $A$-modules $W$ satisfying the following conditions: $W_1$ is a submodule of $W$ and quotient $W/W_1$ is isomorphic to $W_2$. For this purpose we will introduce the following equivalence relation: we will say that $W$ is equivalent to $W'$ iff there is a following commutative diagram of $A$-modules:
\begin{equation}
\xymatrix{
0\ar[r] & W_1\ar[r] \ar[d] & W \ar[r]\ar[d] & W_2\ar[r]\ar[d] & 0\\
0\ar[r] & W_1\ar[r] & W'\ar[r] & W_2 \ar[r] & 0
}
\end{equation}
where all morphisms are isomorphisms. Standard arguments from homological algebra tell us that set of equivalence classes is an abelian group. This group is called extension group ${\rm Ext}^1_{A}(W_2,W_1)$. Neutral element of this group is a direct sum $W_1 \oplus W_2$.

Extension group has the following description in the case of commutative algebras. Assume that $A$ is a finite generated commutative algebra. Consider algebraic variety ${\rm Spec}A = {\rm Hom}_{alg}(A,{\mathbb C})$. Let $\chi$ be a point of ${\rm Spec}A$. ${\mathbb C}^{\chi}$ is a corresponding $A$-module. Then ${\rm Ext}^1_{A}({\mathbb C}^{\chi}, {\mathbb C}^{\chi})$ is a tangent space of ${\rm Spec}A$ at the point $\chi$. Recall the following way of calculation of extension group ${\rm Ext}^1_A(V_1,V_2)$ of $A$-modules $V_1$ and $V_2$ for arbitrary algebra $A$. Assume that we  have a projective resolution of $A$-module $V_1$:
\begin{equation}
\xymatrix{
0 \ar[r] & P_s\ar[r] & ... \ar[r] & P_1\ar[r] & V_1\ar[r] & 0,
}
\end{equation}
where $P_i, i = 1,...,s$ are projective $A$-modules. Applying to this resolution functor ${\rm Hom}_A(-,V_2)$, we get the following complex (not exact sequence!):
\begin{equation}
\label{compl}
\xymatrix{
0 \ar[r] & {\rm Hom}_A(P_1,V_2) \ar[r] & ... \ar[r] & {\rm Hom}_A(P_s,V_2) \ar[r] & 0.
}
\end{equation}
Zeroth cohomology group of this complex is a group ${\rm Hom}_A(V_1,V_2)$, first cohomology group is a group ${\rm Ext}^1_A(V_1,V_2)$. Note that there is a notion of extension group ${\rm Ext}^i_A(V_1,V_2)$ of $A$-modules $V_1$ and $V_2$. In this case ${\rm Ext}^i_A(V_1,V_2)$ is a ith cohomology group of the complex (\ref{compl}).

Consider one-dimensional ${\mathbb C}P$-module ${\mathbb C}^{\chi}$, where $\chi$ is a character of ${\mathbb C}[g^{\pm 1}] \otimes {\mathbb C}[z, z^2 = 1]$.
Restrict ${\mathbb C}P$ - module ${\mathbb C}^{\chi}$ to algebra of Laurent polynomials ${\mathbb C}[g^{\pm 1}]$.
It can be shown in usual way that there is a free resolution (and, hence, projective resolution) of ${\mathbb C}^{\chi}$ as ${\mathbb C}[g^{\pm 1}]$ - module:
\begin{equation}
\label{res}
\xymatrix{
0\ar[r] & {\mathbb C}[g^{\pm 1}]\ar[r]^{j} & {\mathbb C}[g^{\pm 1}]\ar[r]^{p} & {\mathbb C}^{\chi}\ar[r] & 0.
}
\end{equation}
Actually, ${\mathbb C}^{\chi}$ is a quotient of ${\mathbb C}[g^{\pm 1}]$ by ideal generated by $g - \chi(g)1$. Morphism $p$ is a natural projection:
$p: {\mathbb C}[g^{\pm 1}] \to {\mathbb C}[g^{\pm 1}]/<g - \chi(g)1>$. Morphism $j$ is a morphism ${\mathbb C}[g^{\pm 1}] \to {\mathbb C}[g^{\pm 1}]$ is generated by rule $j(1) = (g-\chi(g)\cdot 1)$ and, hence, $j(g) = g(g - \chi(g)\cdot 1)$. One can prove that this morphism is injective. Also, image $j({\mathbb C}[g^{\pm 1}])$ coincides with ${\rm Ker}p$.
Further, come back to ${\mathbb C}P$-module ${\mathbb C}^{\chi}$. In this case ${\mathbb C}^{\chi}$ is a quotient of ${\mathbb C}P$ by ideal generated by elements $g - \chi(g)1, z - \chi(z)1$. It follows that $(z+1)^2 = 2(z+1)$ and $(1-z)^2 = 2(1-z)$. We have the following decomposition ${\mathbb C}P$ into direct sum of projective ${\mathbb C}P$-modules: ${\mathbb C}P = {\mathbb C}P(z+1) \oplus {\mathbb C}P(z-1)$. It can be shown that there is a projective resolution of ${\mathbb C}^{\chi}$ as ${\mathbb C}P$-module:
\begin{equation}
\label{resgf}
\xymatrix{
0\ar[r] & {\mathbb C}P(z\pm 1)\ar[r]^{j} & {\mathbb C}P(z\pm 1)\ar[r]^{p} & {\mathbb C}^{\chi}\ar[r] & 0,
}
\end{equation}
where we take plus or minus simultaneously in both modules as follows: if $\chi(z) = 1$ then we take plus, else we take minus.

Using standard arguments and ${\mathbb C}P$ - resolution of ${\mathbb C}^{\chi}$, one can formulate the following proposition:
\begin{Proposition}
Consider two points $\chi, \psi \in {\rm Spec}{\mathbb C}P$. Then we have the following statements:
\begin{itemize}
\item{
If $\chi \ne \psi$, then ${\rm Ext}^1_{{\mathbb C}P}({\mathbb C}^{\chi},{\mathbb C}^{\psi}) = 0$
}
\item{
${\rm Ext}^1_{{\mathbb C}P}({\mathbb C}^{\chi}, {\mathbb C}^{\chi}) = {\mathbb C}$.
}
\end{itemize}
\end{Proposition}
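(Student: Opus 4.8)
The plan is to read off both Ext groups directly from the length-one projective resolution (\ref{resgf}), since once that resolution is available the entire statement reduces to a two-term computation. Write $\epsilon = \chi(z) \in \{\pm 1\}$, so that (\ref{resgf}) becomes
\[
0 \to {\mathbb C}P(z+\epsilon) \xrightarrow{j} {\mathbb C}P(z+\epsilon) \xrightarrow{p} {\mathbb C}^{\chi} \to 0,
\]
where ${\mathbb C}P(z+\epsilon) = {\mathbb C}P\,e_{\epsilon}$ is the projective summand cut out by the idempotent $e_{\epsilon} = \frac{1}{2}(1+\epsilon z)$ (so that $z e_{\epsilon} = \epsilon e_{\epsilon}$), and $j$ is multiplication by $g - \chi(g)\cdot 1$. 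First I would delete the term ${\mathbb C}^{\chi}$ and apply the contravariant functor ${\rm Hom}_{{\mathbb C}P}(-, {\mathbb C}^{\psi})$, producing the two-term complex
\[
0 \to {\rm Hom}_{{\mathbb C}P}({\mathbb C}P\,e_{\epsilon}, {\mathbb C}^{\psi}) \xrightarrow{j^*} {\rm Hom}_{{\mathbb C}P}({\mathbb C}P\,e_{\epsilon}, {\mathbb C}^{\psi}) \to 0,
\]
whose zeroth cohomology computes ${\rm Hom}_{{\mathbb C}P}({\mathbb C}^{\chi}, {\mathbb C}^{\psi})$ and whose first cohomology is the desired ${\rm Ext}^1_{{\mathbb C}P}({\mathbb C}^{\chi}, {\mathbb C}^{\psi})$.

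The next step is to identify the two terms and the connecting map $j^*$. Since ${\mathbb C}P\,e_{\epsilon}$ is generated by the idempotent $e_{\epsilon}$, any module map to ${\mathbb C}^{\psi}$ is determined by the image of $e_{\epsilon}$, which must lie in $e_{\epsilon}{\mathbb C}^{\psi}$; hence ${\rm Hom}_{{\mathbb C}P}({\mathbb C}P\,e_{\epsilon}, {\mathbb C}^{\psi}) \cong e_{\epsilon}{\mathbb C}^{\psi}$. As $z$ acts on ${\mathbb C}^{\psi}$ by the scalar $\psi(z)$, this space equals ${\mathbb C}^{\psi}$ when $\psi(z) = \epsilon = \chi(z)$ and vanishes otherwise. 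Consequently, whenever $\chi(z) \ne \psi(z)$ both terms are $0$, so ${\rm Ext}^1 = 0$ at once. In the remaining case $\chi(z) = \psi(z)$ both terms are one-dimensional, and precomposition with $j$ sends a homomorphism $f$ to the map $e_{\epsilon} \mapsto (g - \chi(g))f(e_{\epsilon}) = (\psi(g) - \chi(g))f(e_{\epsilon})$; thus $j^*$ is simply multiplication by the scalar $\psi(g) - \chi(g)$.

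Finally I would extract the cohomology. If $\chi(z) = \psi(z)$ but $\chi(g) \ne \psi(g)$, the scalar $\psi(g) - \chi(g)$ is nonzero, so $j^*$ is an isomorphism and ${\rm Ext}^1 = {\rm coker}(j^*) = 0$; combined with the previous paragraph this exhausts every way in which $\chi \ne \psi$ (a character of ${\mathbb C}P$ being determined by its values on $g$ and $z$), which yields the first bullet. If $\chi = \psi$, then $\psi(g) - \chi(g) = 0$, so $j^*$ is the zero map between one-dimensional spaces and ${\rm Ext}^1 = {\rm coker}(0) = {\mathbb C}$, which yields the second bullet. I do not expect a genuine obstacle here: the resolution already carries the content, and the only care required is the bookkeeping of the two $z$-eigenvalue cases together with the verification that $j^*$ acts as multiplication by $\psi(g) - \chi(g)$. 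As a consistency check, the value ${\rm Ext}^1_{{\mathbb C}P}({\mathbb C}^{\chi}, {\mathbb C}^{\chi}) = {\mathbb C}$ agrees with the general principle recalled earlier that this group is the tangent space to ${\rm Spec}\,{\mathbb C}P$ at $\chi$, which is one-dimensional because each component of ${\rm Spec}\,{\mathbb C}P$ is a smooth curve, namely the torus ${\mathbb C}^*$.
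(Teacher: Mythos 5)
Your proposal is correct and follows essentially the same route as the paper: both apply ${\rm Hom}_{{\mathbb C}P}(-,{\mathbb C}^{\psi})$ to the two-term projective resolution (\ref{resgf}) and read off the cohomology of the resulting two-term complex. You simply supply the details the paper leaves implicit, namely the identification ${\rm Hom}_{{\mathbb C}P}({\mathbb C}P e_{\epsilon},{\mathbb C}^{\psi})\cong e_{\epsilon}{\mathbb C}^{\psi}$ and the observation that $j^*$ is multiplication by $\psi(g)-\chi(g)$, which correctly splits the case $\chi\ne\psi$ into the subcases $\chi(z)\ne\psi(z)$ (both terms vanish) and $\chi(z)=\psi(z)$, $\chi(g)\ne\psi(g)$ ($j^*$ is an isomorphism).
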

\begin{proof}
Applying to resolution (\ref{resgf}) functor ${\rm Hom}_{{\mathbb C}[g,g^{-1}]}(-,{\mathbb C}^{\psi})$, we get the following complex:
$$
{\rm Hom}_{{\mathbb C}P}({\mathbb C}P(z \pm 1),{\mathbb C}^{\psi}) \to {\rm Hom}_{{\mathbb C}P}({\mathbb C}P(z \pm 1),{\mathbb C}^{\psi}).
$$
One can show that if $\chi \ne \psi$ then this map is isomorphism, and hence, zeroth and first cohomology groups are trivial. In the case $\chi = \psi$ we obtain that ${\rm Hom}_{{\mathbb C}P}({\mathbb C}^{\chi}, {\mathbb C}^{\chi}) = {\rm Ext}^1_{{\mathbb C}P}({\mathbb C}^{\chi}, {\mathbb C}^{\chi}) = {\mathbb C}$.
\end{proof}

Also, we can make some remarks on homological properties of so-called induced ${\mathbb C}G$-modules.

{\bf Remark.}
Using resolution (\ref{resgf}), we can get resolution of some ${\mathbb C}G$-modules.
Tensoring sequence (\ref{resgf}) by ${\mathbb C}G$ over ${\mathbb C}P$, we get the following sequence:
\begin{equation}
\label{rescg}
\xymatrix{
0\ar[r] & {\mathbb C}G \otimes_{{\mathbb C}P} {\mathbb C}P(z \pm 1) \ar[r] & {\mathbb C}G \otimes_{{\mathbb C}P} {\mathbb C}P(z \pm 1) \ar[r] & {\mathbb C}G \otimes_{{\mathbb C}P} {\mathbb C}^{\chi} \ar[r] & 0.
}
\end{equation}
Module ${\mathbb C}G \otimes_{{\mathbb C}P} {\mathbb C}^{\chi}$ is called module {\it induced} by character $\chi$ of subgroup $P$.
Sequence (\ref{rescg}) is a projective resolution of ${\mathbb C}G \otimes_{{\mathbb C}P} {\mathbb C}^{\chi}$, i.e. ${\mathbb C}G \otimes_{{\mathbb C}P} {\mathbb C}P(z \pm 1)$ is a projective ${\mathbb C}G$ - module.

Applying standard arguments to ${\mathbb C}G$-modules $W_{\chi} = {\mathbb C}G \otimes_{{\mathbb C}P} {\mathbb C}^{\chi}$, $W_{\psi} = {\mathbb C}G \otimes_{{\mathbb C}P} {\mathbb C}^{\psi}$ and resolution (\ref{rescg}). We get the following proposition:
\begin{Proposition}
We have the following isomorphisms for $W_{\chi}$ and $W_{\psi}$:
\begin{itemize}
\item{
if $\chi \ne \psi, \psi^{-1}$, then
\begin{equation}
{\rm Ext}^1_{{\mathbb C}G}(W_{\chi},W_{\psi}) = 0.
\end{equation}
}
\item{
if $\chi = \psi, \psi^{-1} \ne \pm 1$
\begin{equation}
{\rm Ext}^1_{{\mathbb C}G}(W_{\chi}, W_{\psi}) = {\mathbb C},
\end{equation}
}
\item{
if $\chi = \psi = \pm 1$, then
\begin{equation}
{\rm Ext}^1_{{\mathbb C}G}(W_{\chi}, W_{\psi}) = {\mathbb C}^2,
\end{equation}
}
\end{itemize}
\end{Proposition}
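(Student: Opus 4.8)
The plan is to exploit the fact that $W_{\chi} = {\mathbb C}G \otimes_{{\mathbb C}P} {\mathbb C}^{\chi}$ is an induced module and that the resolution (\ref{rescg}) is itself induced from the ${\mathbb C}P$-resolution (\ref{resgf}) of ${\mathbb C}^{\chi}$. Concretely, I would apply the functor ${\rm Hom}_{{\mathbb C}G}(-, W_{\psi})$ to the projective ${\mathbb C}G$-resolution (\ref{rescg}) of $W_{\chi}$ and compute the cohomology of the resulting two-term complex. The key move is the induction--restriction adjunction ${\rm Hom}_{{\mathbb C}G}({\mathbb C}G \otimes_{{\mathbb C}P} M, N) \cong {\rm Hom}_{{\mathbb C}P}(M, N|_{{\mathbb C}P})$, which is natural in $M$. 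Applying it termwise to (\ref{rescg}) identifies the complex computing ${\rm Ext}^{\bullet}_{{\mathbb C}G}(W_{\chi}, W_{\psi})$ with the complex obtained by applying ${\rm Hom}_{{\mathbb C}P}(-, W_{\psi}|_{{\mathbb C}P})$ to the ${\mathbb C}P$-resolution (\ref{resgf}) of ${\mathbb C}^{\chi}$. This yields the Eckmann--Shapiro isomorphism
\begin{equation}
{\rm Ext}^i_{{\mathbb C}G}(W_{\chi}, W_{\psi}) \cong {\rm Ext}^i_{{\mathbb C}P}({\mathbb C}^{\chi}, W_{\psi}|_{{\mathbb C}P}), \quad i \ge 0,
\end{equation}
reducing everything to the commutative group $P$.

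Next I would restrict $W_{\psi}$ to ${\mathbb C}P$. By Corollary \ref{isomor}, $W_{\psi}|_{{\mathbb C}P} \cong {\mathbb C}^{\psi} \oplus {\mathbb C}^{c_x(\psi)} = {\mathbb C}^{\psi} \oplus {\mathbb C}^{\psi^{-1}}$. Since ${\rm Ext}^1$ is additive in the second argument, combining this with the Eckmann--Shapiro isomorphism gives
\begin{equation}
{\rm Ext}^1_{{\mathbb C}G}(W_{\chi}, W_{\psi}) \cong {\rm Ext}^1_{{\mathbb C}P}({\mathbb C}^{\chi}, {\mathbb C}^{\psi}) \oplus {\rm Ext}^1_{{\mathbb C}P}({\mathbb C}^{\chi}, {\mathbb C}^{\psi^{-1}}).
\end{equation}
Now I would invoke the previous proposition, which gives ${\rm Ext}^1_{{\mathbb C}P}({\mathbb C}^{\chi}, {\mathbb C}^{\phi}) = {\mathbb C}$ if $\chi = \phi$ and $0$ otherwise. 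Thus each of the two summands contributes a copy of ${\mathbb C}$ precisely when $\chi = \psi$, respectively $\chi = \psi^{-1}$.

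The conclusion then follows from a short case analysis that keeps track of when $\psi = \psi^{-1}$, i.e.\ when $\psi(g) = \pm 1$. If $\chi \ne \psi, \psi^{-1}$, both summands vanish and ${\rm Ext}^1 = 0$. If $\chi = \psi$ with $\psi(g) \ne \pm 1$, then $\psi \ne \psi^{-1}$, so exactly the first summand survives and ${\rm Ext}^1 = {\mathbb C}$. Finally, if $\chi = \psi$ with $\psi(g) = \pm 1$, then $\psi = \psi^{-1}$, the two conditions $\chi = \psi$ and $\chi = \psi^{-1}$ coincide, both summands survive, and ${\rm Ext}^1 = {\mathbb C}^2$. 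I expect the only genuine care to lie in this degenerate bookkeeping at $\psi(g) = \pm 1$, where $W_{\psi}$ ceases to be irreducible (Corollary \ref{cpmf}) and $W_{\psi}|_{{\mathbb C}P}$ acquires a repeated character, producing the doubling to ${\mathbb C}^2$. The homological input itself is routine once the Eckmann--Shapiro reduction is in place; the main (modest) obstacle is justifying that reduction cleanly, namely checking that (\ref{rescg}) really is the induction of (\ref{resgf}) --- this uses flatness of ${\mathbb C}G$ over ${\mathbb C}P$, which holds since ${\mathbb C}G$ is a free ${\mathbb C}P$-module of rank $2$ --- and that the adjunction isomorphism is compatible with the differentials of the two complexes.
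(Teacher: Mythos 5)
Your proposal is correct and follows essentially the same route as the paper: both apply ${\rm Hom}_{{\mathbb C}G}(-,W_{\psi})$ to the induced resolution (\ref{rescg}), use the induction--restriction adjunction together with Corollary \ref{isomor} to decompose $W_{\psi}|_{{\mathbb C}P}\cong{\mathbb C}^{\psi}\oplus{\mathbb C}^{\psi^{-1}}$, and then do the same case analysis on $\psi(g)=\pm1$. The only cosmetic difference is that the paper reduces ${\rm Ext}^1_{{\mathbb C}G}$ to ${\rm Hom}_{{\mathbb C}G}(W_{\chi},W_{\psi})$ by noting the two cohomology groups of the two-term complex agree, while you package the same computation as the Eckmann--Shapiro isomorphism and quote the previously computed ${\rm Ext}^1_{{\mathbb C}P}$ between characters.
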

\begin{proof}
we have the following complex:
\begin{equation}
\xymatrix{
 {\rm Hom}_{{\mathbb C}G}({\mathbb C}G \otimes_{{\mathbb C}P} {\mathbb C}P(z \pm 1), W_{\psi}) \ar[r] &
{\rm Hom}_{{\mathbb C}G}({\mathbb C}G \otimes_{{\mathbb C}P} {\mathbb C}P(z \pm 1), W_{\psi}).
}
\end{equation}
Zeroth and first cohomology group of this complex are isomorphic, i.e. ${\rm Hom}_{{\mathbb C}G}(W_{\chi}, W_{\psi}) \cong {\rm Ext}^1_{{\mathbb C}G}(W_{\chi},W_{\psi})$.
Using adjacency of functors and corollary \ref{isomor}, we get that
$$
{\rm Hom}_{{\mathbb C}G}(W_{\chi},W_{\psi}) \cong {\rm Hom}_{{\mathbb C}P}({\mathbb C}^{\chi}, {\mathbb C}^{\psi} \oplus {\mathbb C}^{c_x(\psi)}).
$$
Using this isomorphism, we obtain the statement of the proposition.
\end{proof}

\section*{Appendix B. Some aspects of noncommutative geometry in representation theory of $G$.}

We can describe the connection between ${\mathbb C}G$ - modules and ${\mathbb C}P$ - modules from point of view of noncommutative algebraic geometry also. One can find many aspects of noncommutative algebraic geometry in many papers, for example, \cite{Kraft}, \cite{KonRos}, \cite{Proc}.

Consider a finite-generated associative algebra $A$. Recall that {\it a variety of representations} of algebra $A$ is called the variety ${\bf Rep}_n(A) = {\rm Hom}_{alg}(A, Mat_n({\mathbb C}))$. Also, there is a natural action of ${\rm GL}_n({\mathbb C})$ on $Mat_n({\mathbb C})$ by conjugation. Equivalently, if we fix basis in n-dimensional space ${\mathbb C}^n$, then we have an isomorphism: $Mat_n({\mathbb C}) \cong {\rm End}_{\mathbb C}({\mathbb C}^n)$. In this case, the group ${\rm GL}_n({\mathbb C})$ acts on $Mat_n({\mathbb C})$ by substitutions of bases in ${\mathbb C}^n$. Using this action, we have well-defined action of ${\rm GL}_n({\mathbb C})$ on ${\bf Rep}_n(A)$. Thus, we can consider the quotient of ${\bf Rep}_n(A)$ by action of ${\rm GL}_n({\mathbb C})$. This quotient is called by {\it a moduli variety} of algebra $A$. We will denote this variety by ${\cal M}_n(A)$ for an arbitrary algebra $A$.

We can consider any element $a \in A$ as a matrix-valued function on ${\bf Rep}_n(A)$. Namely, $a(\rho) = \rho(a), \rho \in {\bf Rep}_n(A)$. Also, we can define ${\rm GL}_n({\mathbb C})$ - invariant functions on ${\bf Rep}_n(A)$ in the following manner: ${\rm Tr}(a) (\rho) = {\rm Tr}\rho(a), \rho \in {\bf Rep}_n(A)$. Using ${\rm GL}_n({\mathbb C})$ - invariance, we can consider these functions as functions on ${\cal M}_n(A)$.

Recall the following well-known result.
\begin{Proposition}{\cite{Kraft}}
\label{genmod}
For a finite-generated algebra $A$, the ring of regular polynomial functions on ${\cal M}_n(A)$ is generated by functions ${\rm Tr}a, a \in A$.
\end{Proposition}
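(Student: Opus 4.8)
The plan is to reduce the statement to the classical First Fundamental Theorem for the simultaneous conjugation action of ${\rm GL}_n({\mathbb C})$ on tuples of matrices, which is the content imported from \cite{Kraft}.

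First I would fix a finite set of generators $a_1,\dots,a_m$ of $A$, so that a representation $\rho\in{\bf Rep}_n(A)$ is completely determined by the matrices $\rho(a_1),\dots,\rho(a_m)\in Mat_n({\mathbb C})$. This identifies ${\bf Rep}_n(A)$ with the closed ${\rm GL}_n({\mathbb C})$-stable subvariety of $(Mat_n({\mathbb C}))^{m}$ cut out by the entrywise relations of $A$, and its coordinate ring ${\mathbb C}[{\bf Rep}_n(A)]$ with a ${\rm GL}_n({\mathbb C})$-equivariant quotient of the polynomial ring ${\mathbb C}[(Mat_n({\mathbb C}))^m]$ in the entries of $m$ generic $n\times n$ matrices $X_1,\dots,X_m$. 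Since ${\cal M}_n(A)$ is by definition the quotient of ${\bf Rep}_n(A)$ by ${\rm GL}_n({\mathbb C})$, its ring of regular functions is the invariant ring ${\mathbb C}[{\bf Rep}_n(A)]^{{\rm GL}_n({\mathbb C})}$.

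The second step uses that ${\rm GL}_n({\mathbb C})$ is reductive. The Reynolds operator makes the formation of invariants exact, so the surjection ${\mathbb C}[(Mat_n({\mathbb C}))^m]\to{\mathbb C}[{\bf Rep}_n(A)]$ restricts to a surjection on invariant subrings. Hence it suffices to describe the invariants of $(Mat_n({\mathbb C}))^m$ (the case of the free algebra on $m$ generators) and then push forward the resulting generators along the quotient map. I would then invoke the First Fundamental Theorem of matrix invariant theory (Procesi, Razmyslov): the invariant ring ${\mathbb C}[(Mat_n({\mathbb C}))^m]^{{\rm GL}_n({\mathbb C})}$ is generated by the trace monomials ${\rm Tr}(X_{i_1}\cdots X_{i_k})$ over all finite words $i_1\dots i_k$. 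Under the quotient map the image of ${\rm Tr}(X_{i_1}\cdots X_{i_k})$ is exactly the function $\rho\mapsto{\rm Tr}\,\rho(a_{i_1}\cdots a_{i_k})$, that is, the trace function ${\rm Tr}\,a$ attached to the monomial $a=a_{i_1}\cdots a_{i_k}\in A$. Because every element of $A$ is a finite linear combination of such monomials and the trace is linear, the linear span of $\{{\rm Tr}\,a:a\in A\}$ coincides with the span of the trace monomials; consequently the subalgebra these functions generate is all of ${\mathbb C}[{\bf Rep}_n(A)]^{{\rm GL}_n({\mathbb C})}={\mathbb C}[{\cal M}_n(A)]$, which is the assertion.

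The main obstacle is the First Fundamental Theorem itself: proving that \emph{every} conjugation invariant is a polynomial in trace monomials is the substantial input, and it is precisely what we cite from \cite{Kraft}. The remaining ingredients---realizing ${\bf Rep}_n(A)$ as an equivariant quotient of a generic-matrix space, the exactness of invariants under reductivity, and the identification of trace monomials with the functions ${\rm Tr}\,a$---are formal. One could additionally cite the Cayley--Hamilton/Procesi bound to restrict attention to words of bounded length, but this refinement is not needed for the generation statement.
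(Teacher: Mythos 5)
The paper does not prove this proposition at all: it is imported verbatim from \cite{Kraft} as a known result, so there is no internal argument to compare against. Your outline is the standard and correct derivation --- identify ${\mathbb C}[{\cal M}_n(A)]$ with ${\mathbb C}[{\bf Rep}_n(A)]^{{\rm GL}_n({\mathbb C})}$, use reductivity of ${\rm GL}_n({\mathbb C})$ to get surjectivity of the restriction map on invariants from the generic-matrix space $(Mat_n({\mathbb C}))^m$, and then invoke the First Fundamental Theorem of matrix invariants (Procesi--Razmyslov) to generate by trace monomials, which by linearity of the trace are among the functions ${\rm Tr}\,a$, $a \in A$. You are also right to flag that the only non-formal input is the First Fundamental Theorem itself, which is exactly the content being cited; everything else in your reduction is routine and correct.
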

We will say that $A$-module $V$ has {\it Jordan-Holder composite factors} (briefly, composite factors) $W_1$,...,$W_s$ if there is a sequence of $A$-submodules of the following type:
\begin{equation}
0 = M_0 \subset M_1 \subset M_2 ... \subset M_{s-1} \subset M_s = V
\end{equation}
and the quotients $W_i = M_i/M_{i-1}, i = 1,...,s$ are irreducible $A$-modules. It is known that the set of composite factors for any $A$-module is unique up to permutation.
Let us denote ${\bf gr}(V)$ the direct sum $W_1 \oplus ... \oplus W_s$ for $A$-module $V$ with composite factors $W_1$,...,$W_s$.
Two $n$-dimensional $A$-modules $V_1$ and $V_2$ correspond to the same point of ${\cal M}_n(A)$ iff ${\bf gr}(V_1) \cong {\bf gr}(V_2)$. Equivalently, $V_1$ and $V_2$ correspond to the same point of ${\cal M}_n(A)$ iff traces of any elements $a \in A$ on $V_1$ and $V_2$ are the same, i.e. ${\rm Tr}a |_{V_1} = {\rm Tr}a |_{V_2}$ for any element $a \in A$.

Let us come back to our situation.
We have an immersion of algebras: ${\mathbb C}P \to {\mathbb C}G$. Thus, we have induced maps:
\begin{equation}
p_1: {\bf Rep}_2({\mathbb C}G) \to {\bf Rep}_2 ({\mathbb C}P)
\end{equation}
and
\begin{equation}
p_2: {\cal M}_2({\mathbb C}G) \to {\cal M}_2 ({\mathbb C}P).
\end{equation}

Let us describe variety ${\bf Rep}_2{\mathbb C}P$ as follows. Up to ${\rm GL}_2({\mathbb C})$ - conjugacy, we have three possibilities for picking up matrix corresponding to element $g$:
\begin{equation}
1.
\begin{pmatrix}
\alpha & 0\\
0 & \beta
\end{pmatrix},
2.
\begin{pmatrix}
\alpha & 1\\
0 & \alpha
\end{pmatrix},
3.
\begin{pmatrix}
\alpha & 0\\
0 & \alpha
\end{pmatrix},
\alpha, \beta \in {\mathbb C}^*.
\end{equation}
Recall that $z$ commutes with $g$. Thus, we have the following possibilities for $z$ in the first and third cases:
\begin{equation}
\begin{pmatrix}
1 & 0\\
0 & 1
\end{pmatrix},
\begin{pmatrix}
-1 & 0\\
0 & -1
\end{pmatrix},
\begin{pmatrix}
1 & 0\\
0 & -1
\end{pmatrix}.
\end{equation}
In the second case, we have the following situation for element $z$:
\begin{equation}
\begin{pmatrix}
1 & 0\\
0 & 1
\end{pmatrix},
\begin{pmatrix}
-1 & 0\\
0 & -1
\end{pmatrix}.
\end{equation}
Further, consider ${\cal M}_2({\mathbb C}P)$. It can be shown that the ring of regular polynomial functions on ${\cal M}_2({\mathbb C}P)$ is generated by ${\rm Tr}(g), {\rm Tr}(g^2), {\rm Tr}(z)$. Also, function ${\rm Tr}(z)$ can have only three possible values $\{-2, 0, 2\}$. It means that every point of ${\cal M}_2({\mathbb C}P)$ is defined by two continuous parameters ${\rm Tr}(g), {\rm Tr}(g^2)$ and discrete parameter ${\rm Tr}(z)$.
We have the following proposition: 
\begin{Proposition}
Variety ${\cal M}_2({\mathbb C}P)$ has three irreducible components: $U_{-}$, $U_{0}$ and $U_{+}$. Every component is Zarisski - open subset of affine plane ${\mathbb C}^2$ given by relation: $({\rm Tr}g)^2 - {\rm Tr}g^2 \ne 0$. These components are indexed by values ${\rm Tr}(z) = -2, 0, 2$ respectively.
\end{Proposition}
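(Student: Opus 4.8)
The plan is to leverage the description, assumed from the discussion preceding the proposition, that the ring of regular functions on ${\cal M}_2({\mathbb C}P)$ is generated by the three trace functions ${\rm Tr}\,g$, ${\rm Tr}\,g^2$, ${\rm Tr}\,z$, together with the fact that ${\rm Tr}\,z$ takes only the values $-2,0,2$. Since ${\rm Tr}\,z$ is a regular function with finite image, it is locally constant, so its three level sets $U_{-},U_{0},U_{+}$ (where ${\rm Tr}\,z=-2,0,2$ respectively) are simultaneously closed and open in ${\cal M}_2({\mathbb C}P)$ and partition it. Consequently it suffices to prove that each $U_{\bullet}$ is irreducible of dimension $2$ and is isomorphic to the Zariski-open subset of the affine plane cut out by $({\rm Tr}\,g)^2-{\rm Tr}\,g^2\neq 0$; disjointness together with irreducibility then forces these three closed-and-open sets to be precisely the irreducible components.

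First I would fix the conjugacy normal form of the matrix $Z$ representing $z$ in each stratum. Since $z^2=1$, $Z$ is semisimple with eigenvalues in $\{\pm1\}$, so up to conjugacy $Z=-I$, $Z={\rm diag}(1,-1)$, $Z=I$ according to ${\rm Tr}\,z=-2,0,2$. A point of ${\cal M}_2({\mathbb C}P)$ is represented by a closed orbit, i.e. by a semisimple commuting pair $(G,Z)$; as $G$ commutes with $Z$ it preserves the $z$-eigenspaces, and a semisimple representation is the direct sum of its one-dimensional constituents, each a character $(\alpha,\pm1)$ with $\alpha\in{\mathbb C}^{*}$ (the element $g$ is a unit, so $G\in{\rm GL}_2$). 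In the strata $U_{\pm}$ both constituents carry the same $z$-value, so the datum is an unordered pair $\{\alpha,\beta\}\subset{\mathbb C}^{*}$; in $U_{0}$ the two constituents are distinguished by their $z$-eigenvalue.

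Next I would translate this eigenvalue data into the trace coordinates. Writing $\alpha,\beta$ for the eigenvalues of $G$, one has ${\rm Tr}\,g=\alpha+\beta$ and ${\rm Tr}\,g^2=\alpha^2+\beta^2$, whence
\begin{equation}
\det g=\alpha\beta=\tfrac{1}{2}\left(({\rm Tr}\,g)^2-{\rm Tr}\,g^2\right).
\end{equation}
Thus $G$ is invertible if and only if $({\rm Tr}\,g)^2-{\rm Tr}\,g^2\neq 0$, which pins down the locus of admissible values. Conversely, for any pair of numbers $(p,q)$ with $p^2-q\neq 0$ I would realize it by taking $\alpha,\beta$ to be the two (necessarily nonzero) roots of $t^2-pt+\tfrac12(p^2-q)$, producing a genuine semisimple commuting pair with ${\rm Tr}\,g=p$, ${\rm Tr}\,g^2=q$. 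This shows the trace map carries each stratum onto exactly the open set $\{({\rm Tr}\,g)^2-{\rm Tr}\,g^2\neq 0\}\subset{\mathbb C}^2$. Since a nonempty Zariski-open subset of the irreducible plane ${\mathbb C}^2$ is itself irreducible of dimension $2$, each $U_{\bullet}$ is irreducible, and the reduction of the first paragraph completes the argument.

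The step I expect to be the main obstacle is the \emph{bijectivity} of the trace parametrization on the stratum $U_{0}$, as opposed to the mere surjectivity computed above. In $U_{\pm}$ the two constituents carry equal $z$-values and may be freely interchanged, so the symmetric functions ${\rm Tr}\,g,{\rm Tr}\,g^2$ genuinely coordinatize the stratum. In $U_{0}$, however, the constituents $(\alpha,+1)$ and $(\beta,-1)$ are pinned by the sign of the $z$-eigenvalue and cannot be swapped, so one must verify that ${\rm Tr}\,g,{\rm Tr}\,g^2,{\rm Tr}\,z$ really suffice to separate the closed orbits lying over $U_{0}$; the relevant check is whether the extra invariant ${\rm Tr}(gz)=\alpha-\beta$, which satisfies $({\rm Tr}(gz))^2=2\,{\rm Tr}\,g^2-({\rm Tr}\,g)^2$, is already forced by the three generators or must be adjoined. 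Resolving this point determines whether $U_{0}$ is identified with the parabola complement $\{({\rm Tr}\,g)^2-{\rm Tr}\,g^2\neq 0\}$ in the coordinates $({\rm Tr}\,g,{\rm Tr}\,g^2)$ or with the torus $\{\alpha\beta\neq 0\}$ in the coordinates $(\alpha,\beta)=({\rm Tr}\,g,{\rm Tr}(gz))$ after a change of variables; in either description it is an irreducible two-dimensional Zariski-open subset of ${\mathbb C}^2$ whose boundary is the vanishing of the degree-two function $\det g=\tfrac12(({\rm Tr}\,g)^2-{\rm Tr}\,g^2)$, so the count of components and their irreducibility are unaffected.
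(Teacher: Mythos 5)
Your argument follows the same route as the paper's, but the paper's version is a two-line sketch: it records the identity $\det(g)=\tfrac12\bigl(({\rm Tr}\,g)^2-{\rm Tr}\,g^2\bigr)$, observes that $g$ is invertible so $\det(g)\neq 0$, and declares the statement proved, relying silently on the preceding unproved assertion that the invariant ring is generated by ${\rm Tr}\,g,\ {\rm Tr}\,g^2,\ {\rm Tr}\,z$. You supply everything the paper omits: the reduction to level sets of the locally constant function ${\rm Tr}\,z$, the description of closed orbits as semisimple pairs of characters, and the surjectivity of the trace map onto $\{({\rm Tr}\,g)^2-{\rm Tr}\,g^2\neq 0\}$. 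More importantly, the ``obstacle'' you flag on $U_0$ is a genuine defect in the paper's own setup, not merely a loose end in your argument: the two non-isomorphic semisimple modules ${\mathbb C}^{(\alpha,+1)}\oplus{\mathbb C}^{(\beta,-1)}$ and ${\mathbb C}^{(\beta,+1)}\oplus{\mathbb C}^{(\alpha,-1)}$ are distinct closed orbits with equal ${\rm Tr}\,g,\ {\rm Tr}\,g^2,\ {\rm Tr}\,z$ but opposite ${\rm Tr}(gz)=\alpha-\beta$, so ${\rm Tr}(gz)$ must be adjoined as a generator and $U_0$ is really ${\mathbb C}^*\times{\mathbb C}^*$ in the coordinates $({\rm Tr}\,g,{\rm Tr}(gz))$ rather than the parabola complement in $({\rm Tr}\,g,{\rm Tr}\,g^2)$ (the trace map onto the latter is $2{:}1$, branched along $\alpha=\beta$). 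Your identity $({\rm Tr}(gz))^2=2\,{\rm Tr}\,g^2-({\rm Tr}\,g)^2$ is correct on $U_0$ since there $\det z=-1$, and your conclusion that the component count, irreducibility, and dimension are unaffected is right, so your proof establishes the proposition in the form in which it is actually true; it would be worth stating explicitly that for $U_0$ the phrase ``open subset given by $({\rm Tr}\,g)^2-{\rm Tr}\,g^2\neq 0$'' should be read as the locus $\det g\neq 0$ in suitable coordinates rather than as the literal subset of the $({\rm Tr}\,g,{\rm Tr}\,g^2)$-plane.
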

\begin{proof}
Recall that we have the following relation for $2 \times 2$ matrices:
\begin{equation}
det(g) = \frac12 (({\rm Tr}g)^2 - {\rm Tr}g^2).
\end{equation}
Since $g$ is invertible, $det(g) \ne 0$. Thus, we get the required statement.
\end{proof}

Consider variety ${\bf Rep}_2{\mathbb C}G$. We have the following cases for matrices corresponding to $x, y, z$:
\begin{itemize}
\item{
$x,y,z$ are reflections,}
\item{
$z$ is scalar,
}
\end{itemize}
Consider first case. In this case $x$, $y$ and $z$ are reflections. Recall that $z$ commutes with $x$ and $y$.
Note the following proposition:
\begin{Proposition}
If two reflections $x$ and $z$ commute, then there is a basis in which $x$ and $z$ are diagonal matrices.
\end{Proposition}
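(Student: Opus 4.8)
The plan is to exploit the fact that a reflection on a two-dimensional space is a diagonalizable operator with two \emph{distinct} eigenvalues, so that its eigenspaces are lines; any operator commuting with it must preserve those lines and hence, by one-dimensionality, share the same eigenvectors.

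First I would pin down the meaning of reflection in our setting: $x$ satisfies $x^2 = 1$ and is not scalar (the scalar situation for $z$ being treated in the separate case above). Its minimal polynomial therefore divides $t^2 - 1 = (t-1)(t+1)$ and, since $x \ne \pm I$, equals it exactly. Because this polynomial has distinct roots, $x$ is diagonalizable with eigenvalues $+1$ and $-1$, yielding a decomposition
\begin{equation}
{\mathbb C}^2 = E_{+} \oplus E_{-},
\end{equation}
where $E_{\pm} = \{ v \mid xv = \pm v \}$ are one-dimensional.

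Next I would bring in the commutation relation $xz = zx$. If $v \in E_{\lambda}$, i.e. $xv = \lambda v$ with $\lambda = \pm 1$, then $x(zv) = z(xv) = \lambda(zv)$, so $zv \in E_{\lambda}$; thus $z$ preserves both $E_{+}$ and $E_{-}$. Since each $E_{\lambda}$ is one-dimensional, the restriction of $z$ to it is multiplication by a scalar, so every nonzero vector of $E_{\lambda}$ is simultaneously an eigenvector of $x$ and of $z$. Choosing one such vector from $E_{+}$ and one from $E_{-}$ produces a basis of ${\mathbb C}^2$ in which both $x$ and $z$ are diagonal, which is the assertion.

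The only delicate point is the first step, namely ensuring that the two eigenvalues of $x$ are genuinely distinct; this is exactly the content of $x$ being a reflection rather than a scalar involution. Without it the eigenspaces could fail to be lines and simultaneous diagonalization would not follow from commutation alone. Everything after that is the routine observation that commuting operators respect each other's eigenspace decomposition, which becomes immediate once those eigenspaces are one-dimensional.
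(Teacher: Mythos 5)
Your proof is correct and follows essentially the same route as the paper's: both arguments diagonalize one of the two reflections using its distinct eigenvalues $\pm 1$, observe that the commuting operator preserves the resulting one-dimensional eigenspaces, and conclude simultaneous diagonalizability (you start from the eigenspaces of $x$, the paper from those of $z$, which is an immaterial symmetry). Your additional care about the minimal polynomial and the non-scalar hypothesis is a welcome, if minor, tightening of the same argument.
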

\begin{proof}
Pick up eigenvectors $v_1,v_2$ of $z$ such that $zv_1 = v_1, zv_2 = -v_2$. We have the following relation:
$x z v_1 = x v_1$ and $zx v_1 = x v_1$. Hence, $x v_1$ is eigenvector corresponding to eigenvalue $1$. Therefore, $x v_1 = \alpha v_1$ for some $\alpha \in {\mathbb C}^*$. Since $x$ is a reflection, then $\alpha = \pm 1$. The rest is analogous.
\end{proof}

Thus, if $x,y,z$ are reflections, then $x,y,z$ commuting operators. Thus, $x,y,z$ have the following view:
\begin{equation}
\label{iden}
\begin{pmatrix}
\pm 1 & 0\\
0 & \pm 1
\end{pmatrix}
\end{equation}
Consider second case. If $x$ and $y$ correspond to commuting matrices, then this case is quite similar to first case. If $x$ and $y$ are not commuting then matrix of $z$ is scalar.

\begin{Proposition}
We have the following decomposition:
\begin{equation}
{\cal M}_2 ({\mathbb C}G) = {\mathbb C}^1_+ \cup {\mathbb C}^1_{-} \cup S,
\end{equation}
where ${\mathbb C}^1_+$, ${\mathbb C}^1_{-}$ are 1-dimensional families of ${\mathbb C}G$ - modules corresponding to ${\rm Tr}z = \pm 2 {\rm Tr}x = {\rm Tr}y = 0$ respectively, $S$ is a set of isolated points, $|S| = 27$.
\end{Proposition}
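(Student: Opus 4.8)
The plan is to identify the closed points of $\mathcal{M}_2(\mathbb{C}G)$ with isomorphism classes of semisimple $2$-dimensional $\mathbb{C}G$-modules, using the fact recorded above (Proposition \ref{genmod} together with the Procesi description) that two $2$-dimensional modules give the same point iff their semisimplifications $\mathbf{gr}(V)$ agree. Thus I would enumerate all semisimple $2$-dimensional representations and sort them into the irreducible ones, which assemble into positive-dimensional families, and the decomposable ones, sums of two characters, which will turn out to contribute the isolated points. The trace functions $\mathrm{Tr}(x), \mathrm{Tr}(y), \mathrm{Tr}(z), \mathrm{Tr}(g)$ and their companions serve both as coordinates separating these points and as the tool to decide which reducible modules lie in the closure of a family.

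First I would dispose of the irreducible locus. Since $z$ is central in $G$, Schur's lemma forces $z = \pm I$ on any irreducible module, so the value $\mathrm{Tr}(z) = \pm 2$ splits the irreducibles into two parts. On each part the relation $xgx = g^{-1}$ makes the module an irreducible representation of the infinite dihedral quotient, and by the induced-module classification of Section 5 every such module is $\mathbb{C}G \otimes_{\mathbb{C}P} \mathbb{C}^{\chi}$ with $\chi(g) \neq \pm 1$ and $\chi(z) = \pm 1$ fixed. In the basis $1 \otimes v,\ x \otimes v$ one computes directly that $x$ and $y$ act by trace-zero matrices while $g$ acts by $\mathrm{diag}(a, a^{-1})$; hence each family is cut out by $\mathrm{Tr}(x) = \mathrm{Tr}(y) = 0$, $\mathrm{Tr}(z) = \pm 2$, and is parametrized by the single coordinate $\mathrm{Tr}(g) = a + a^{-1}$, giving the two affine lines $\mathbb{C}^1_{+}$ and $\mathbb{C}^1_{-}$.

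Next I would treat the decomposable modules. These are the sums $\chi \oplus \chi'$ of two characters, and since $G^{\mathrm{ab}} \cong \mathbb{Z}_2 \oplus \mathbb{Z}_2 \oplus \mathbb{Z}_2$ there are exactly eight characters, parametrized by $(\chi(x), \chi(y), \chi(z)) \in \{\pm 1\}^3$. I would enumerate the resulting multisets $\{\chi, \chi'\}$ and compute their trace signatures; the key point is to isolate precisely those reducible modules that arise as limits of the irreducible families, namely the degenerate members at $a = \pm 1$ (equivalently $\mathrm{Tr}(g) = \pm 2$), which by Corollary \ref{cpmf} split as ${\mathbb C}^{\chi}_{1} \oplus {\mathbb C}^{\chi}_{-1}$. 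These are exactly the character-sums with signature $(\mathrm{Tr}\,x, \mathrm{Tr}\,y, \mathrm{Tr}\,z) = (0,0,\pm 2)$, so they lie on $\mathbb{C}^1_{\pm}$ and are not isolated. Every other sum of characters has a signature off $(0,0,\pm 2)$, cannot be a degeneration of an irreducible family since the signature varies continuously, and so contributes an isolated point of $S$; one then checks with the trace coordinates (in particular $\mathrm{Tr}(g)$ and $\mathrm{Tr}(gz)$) that distinct multisets give genuinely distinct points, and counts them to obtain $|S|$.

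The main obstacle is the bookkeeping in this last step. One must organize the sums of characters by the value of $\mathrm{Tr}(z)$, i.e. the three strata coming from $z$ scalar $=\pm I$ versus $z$ a reflection, matching the component structure $U_{-} \cup U_{0} \cup U_{+}$ of $\mathcal{M}_2(\mathbb{C}P)$ under the projection $p_2$; then verify that no stratum hides an accidental coincidence of points or an unexpected limit onto a family, and confirm that the two families are disjoint from each other and meet the reducible locus only in the degenerate points just described. Pinning down this enumeration exactly — rather than any individual step, each of which is routine — is what fixes the cardinality of $S$ and establishes the asserted decomposition $\mathcal{M}_2(\mathbb{C}G) = \mathbb{C}^1_{+} \cup \mathbb{C}^1_{-} \cup S$.
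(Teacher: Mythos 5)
Your strategy --- identify points of $\mathcal{M}_2(\mathbb{C}G)$ with semisimple $2$-dimensional modules, put the irreducibles (which have $z=\pm I$ by centrality and are the induced modules $\mathbb{C}G\otimes_{\mathbb{C}P}\mathbb{C}^{\chi}$) into the two families $\mathbb{C}^1_{\pm}$, and treat the remaining points as sums of two characters of $G^{\mathrm{ab}}\cong\mathbb{Z}_2\oplus\mathbb{Z}_2\oplus\mathbb{Z}_2$ --- is sound, and it is essentially the paper's own argument reorganized: the paper runs the same enumeration by sorting the matrices of $x,y,z$ into reflections and scalars rather than by characters. The problem is that the one step you defer as ``bookkeeping'' is the entire content of the assertion $|S|=27$, and when it is carried out by your own method it does not give $27$. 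There are $\binom{8}{2}+8=36$ multisets $\{\chi,\chi'\}$ of characters; the six trace functions $\mathrm{Tr}\,x,\dots,\mathrm{Tr}\,yz$ separate all of them (so no two coincide in $\mathcal{M}_2(\mathbb{C}G)$); and exactly $4$ of them --- those with $\mathrm{Tr}\,x=\mathrm{Tr}\,y=0$ and $\mathrm{Tr}\,z=\pm 2$, i.e. the degenerations of the families at $\mathrm{Tr}(xy)=\pm 2$ --- lie on $\mathbb{C}^1_{+}\cup\mathbb{C}^1_{-}$. That leaves $36-4=32$ isolated points, not $27$.

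The discrepancy is not in your method but in the stated count: the paper's case analysis undercounts. In the case where $x,y,z$ are all reflections there are four points, with $(\mathrm{Tr}\,xy,\mathrm{Tr}\,xz,\mathrm{Tr}\,yz)$ equal to $(2,2,2)$, $(2,-2,-2)$, $(-2,2,-2)$, $(-2,-2,2)$; the paper omits $(2,2,2)$, which is realized by $x=y=z=\mathrm{diag}(1,-1)$, is a legitimate representation of $G$, and does not lie on either family since its $\mathrm{Tr}\,z$ is $0$. In each of the two cases ``$y,z$ reflections, $x$ scalar'' and ``$x,z$ reflections, $y$ scalar'' the sign of $\mathrm{Tr}\,x$ (resp.\ $\mathrm{Tr}\,y$) and the sign of $\mathrm{Tr}\,yz$ (resp.\ $\mathrm{Tr}\,xz$) are independent, giving four points each rather than the two claimed. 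This accounts exactly for the missing $1+2+2=5$ points. So to turn your proposal into a proof you must actually perform the enumeration, and in doing so you will either have to exhibit five coincidences among your $32$ candidate points (there are none, by the separation property of the trace coordinates) or conclude that the cardinality in the statement should be $32$ rather than $27$.
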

\begin{proof}
Using proposition \ref{genmod}, we get that the ring of regular polynomial functions on ${\cal M}_2 ({\mathbb C}G)$ has the following generators: ${\rm Tr}(x), {\rm Tr}(y), {\rm Tr}(z), {\rm Tr}(xy), {\rm Tr}(xz), {\rm Tr}(yz)$.
We have the following cases:
\begin{itemize}
\item{$x,y,z$ are reflections and correspond to matrices of type (\ref{iden}). In this case we have ${\rm Tr}z = {\rm Tr}x = {\rm Tr}y = 0$ and $({\rm Tr}xy, {\rm Tr}xz, {\rm Tr}yz) = (2,-2,-2), (-2,2,-2), (-2,-2,2)$. In this case there are 3 points in $S$.}
\item{$y,z$ are reflections, $x$ is a scalar matrix. In this case $y$ and $z$ are matrices of type (\ref{iden}). Thus, ${\rm Tr}z = {\rm Tr}y = 0$, ${\rm Tr}x = \pm 2$, ${\rm Tr}xy = {\rm Tr}xz = 0$ and ${\rm Tr}yz = \pm 2$. In this case there are 2 points in $S$.}
\item{$x,z$ are reflections, $y$ is a scalar matrix. This case is analogous to second one. Thus, ${\rm Tr}z = {\rm Tr}x = 0$, ${\rm Tr}y = \pm 2$, ${\rm Tr}xy = {\rm Tr}yz = 0$ and ${\rm Tr}xz = \pm 2$. In this case there are 2 points in $S$.}
\item{$x$ is a reflection, $y,z$ are scalar matrices. In this case, ${\rm Tr}x = 0, {\rm Tr}y = \pm 2, {\rm Tr}z = \pm 2$ and $({\rm Tr}xy, {\rm Tr}xz, {\rm Tr}yz) = (0,0,2), (0,0,-2)$. In this case there are 4 points in $S$.}
\item{$y$ is a reflection, $x,z$ are scalar matrices. In this case, ${\rm Tr}y = 0, {\rm Tr}x = \pm 2, {\rm Tr}z = \pm 2$ and $({\rm Tr}xy, {\rm Tr}xz, {\rm Tr}yz) = (0,2,0), (0,-2,0)$. In this case there are 4 points in $S$.}
\item{$x,y,z$ are scalar matrices. In this case ${\rm Tr}x = \pm 2, {\rm Tr}y = \pm 2, {\rm Tr}z = \pm 2$, $({\rm Tr}xy, {\rm Tr}xz, {\rm Tr}yz) = (2,2,2), (2,-2,-2), (-2,2,-2), (-2,-2,2)$. In this case there are 8 points in $S$.}
\item{$z$ is reflection, $x,y$ are scalar matrices. In this case ${\rm Tr}z = 0, {\rm Tr}x = \pm 2, {\rm Tr}y = \pm 2$ and ${\rm Tr}xy = \pm 2, {\rm Tr}xz = 0, {\rm Tr}yz = 0$ In this case there are 4 points in $S$.}
\item{$z$ is scalar and $x,y$ are reflections. In this case ${\rm Tr}z = \pm 2$ and ${\rm Tr}x = {\rm Tr}y = 0$. Also, ${\rm Tr}xz = {\rm Tr}yz = 0$. In this case ${\mathbb C}G$ - modules are parameterized by ${\rm Tr}xy$. In this case we have two 1-dimensional families ${\mathbb C}G$-modules. We will denote two components corresponding to ${\rm Tr}z = 2, {\rm Tr}x = {\rm Tr}y = 0$ and ${\rm Tr}z = -2, {\rm Tr}x = {\rm Tr}y = 0$ by ${\mathbb C}^1_{+}$ and ${\mathbb C}^1_{-}$ respectively.}
\end{itemize}
\end{proof}

Let us come back to map $p_2$.
\begin{Proposition}
Varieties $p_2({\mathbb C}^1_{-})$ and $p_2({\mathbb C}^1_{+})$ are curves given by equations:
\begin{equation}
{{\rm Tr}^2 (g) - {\rm Tr}(g^2)} = 2.
\end{equation}
in $U_{-}$ and $U_{+}$ respectively.
\end{Proposition}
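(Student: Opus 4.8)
The plan is to compute $p_2$ in the trace coordinates that, by Proposition \ref{genmod}, generate the coordinate rings of both moduli varieties, and then to read off the equation of the image. Concretely, the functions ${\rm Tr}g$, ${\rm Tr}g^2$, ${\rm Tr}z$ separate the points of ${\cal M}_2({\mathbb C}P)$, while ${\rm Tr}x$, ${\rm Tr}y$, ${\rm Tr}z$, ${\rm Tr}xy$, ${\rm Tr}xz$, ${\rm Tr}yz$ separate those of ${\cal M}_2({\mathbb C}G)$. Since the inclusion ${\mathbb C}P \hookrightarrow {\mathbb C}G$ sends $g \mapsto xy$ and $z \mapsto z$, the map $p_2$ acts on coordinates by $p_2^*({\rm Tr}z) = {\rm Tr}z$, $p_2^*({\rm Tr}g) = {\rm Tr}(xy)$ and $p_2^*({\rm Tr}g^2) = {\rm Tr}\big((xy)^2\big)$.

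On the component ${\mathbb C}^1_{\pm}$ the generator $z$ is the scalar $\pm I$, so ${\rm Tr}z = \pm 2$ and the image lands in the component $U_{\pm}$; the generators $x$ and $y$ are reflections with ${\rm Tr}x = {\rm Tr}y = 0$, and the single free parameter of the family is ${\rm Tr}xy = {\rm Tr}g$. The decisive observation is the value of $\det g$: a trace-zero involution has eigenvalues $+1$ and $-1$, hence $\det x = \det y = -1$ and $\det g = \det x \cdot \det y = 1$. Substituting $\det g = 1$ into the Cayley--Hamilton identity $g^2 - ({\rm Tr}g)\,g + (\det g)\,I = 0$ gives $g^2 = ({\rm Tr}g)\,g - I$, and taking traces yields
\begin{equation}
{\rm Tr}\,g^2 = ({\rm Tr}\,g)^2 - 2 ,
\end{equation}
which rearranges to the asserted relation ${\rm Tr}^2(g) - {\rm Tr}(g^2) = 2$.

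It remains to check that the image is the whole curve rather than a proper subset. The parameter ${\rm Tr}g = {\rm Tr}(xy)$ is the trace of a product of two trace-zero reflections over ${\mathbb C}$, so it ranges over all of ${\mathbb C}$ as one moves through ${\mathbb C}^1_{\pm}$; with ${\rm Tr}g^2$ then pinned down by the identity above, $p_2({\mathbb C}^1_{\pm})$ is exactly the locus $\{{\rm Tr}^2(g) - {\rm Tr}(g^2) = 2\}$. Since on this locus $({\rm Tr}g)^2 - {\rm Tr}g^2 = 2 \ne 0$, it indeed lies inside the Zariski-open set $U_{\pm}$, in agreement with the earlier description of that component. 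I expect no serious obstacle here; the only point requiring a little care is this surjectivity onto the curve, i.e. confirming that ${\rm Tr}xy$ is genuinely an unconstrained parameter of the family ${\mathbb C}^1_{\pm}$.
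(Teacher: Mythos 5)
Your proof is correct and follows essentially the same route as the paper: identify $g=xy$, observe that the reflections force $\det g=1$, and apply Cayley--Hamilton to obtain $({\rm Tr}\,g)^2-{\rm Tr}(g^2)=2\det g=2$. You are in fact slightly more careful than the paper (which passes from $\det g^2=1$ to $\det g=1$ without comment, whereas you compute $\det g=\det x\cdot\det y=1$ directly) and you add the surjectivity-onto-the-curve check that the paper omits.
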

\begin{proof}
One can check that $p_2({\mathbb C}^1_{-}) \subset U_{-}$ and $p_2({\mathbb C}^1_{+}) \subset U_{+}$. Also, remind that $g = xy$. Thus, in the 1-dimensional components, we have the following identity: ${\rm det}g^2 = {\rm det}xyxy = {\rm det}^2 x \cdot {\rm det}^2 y$. Since $x$ and $y$ are reflections< we get that ${\rm det}g^2 = 1$. Using Gamilton - Cayley theorem, we have the following relation for $g$:
\begin{equation}
g^2 - {\rm Tr}(g) \cdot g + {\rm det}g \cdot 1 = 0.
\end{equation}
Since ${\rm det}g = 1$, we get that $g + g^{-1} = {\rm Tr}(g) \cdot 1$ and ${\rm Tr}(g) = {\rm Tr}(g^{-1})$. Also, we have the relation:
\begin{equation}
\label{cur}
\frac{{\rm Tr}^2 (g) - {\rm Tr}(g^2)}{2} = {\rm det}g = 1.
\end{equation}
We get that $p_2({\mathbb C}^1_{-})$ and $p_2({\mathbb C}^1_{+})$ are curves given by (\ref{cur}) in $U_{-}$ and $U_{+}$ respectively.
\end{proof}

\section*{Acknowledgments} The authors are grateful to A.S. Holevo for kind attention to the work and many useful remarks. The first part of the work (Sections 1, 2, 3 and 4) was fulfilled by G.G. Amosov. The second part of the work (Sections 5, 6, Appendix A and Appendix B) was
fulfilled by I.Yu. Zdanovsky. The work of G.G. Amosov is
supported by Russian Science Foundation under grant No 14-21-00162 and performed in Steklov Mathematical Institute of Russian Academy of Sciences.
The work of I.Yu. Zhanovskiy is supported by RFBR, research projects 13-01-00234 and 14-01-00416, and was prepared within the framework of
a subsidy granted to the HSE by the Government of the Russian Federation for the implementation of the Global Competitiveness Program.

\end{document}